\newtheorem{Def}{Definition}
\newtheorem{The}{Theorem}
\newtheorem{Pro}{Proposition}
\newtheorem{Cor}{Corollary}
\newtheorem{Lem}{Lemma}
\newtheorem{Obs}{Observation}
\begin{document}
	%
	
	\title{Strategic Information Revelation in \\Crowdsourcing Systems Without Verification }
	\author{Chao Huang, Haoran Yu, Jianwei Huang, and Randall A. Berry
	
		\thanks{C. Huang is with the Department of Information Engineering, the Chinese University of Hong Kong, Hong Kong; Email: hc017@ie.cuhk.edu.hk. H. Yu is with the School of Computer Science, Beijing Institute of Technology, China; Email: yhrhawk@gmail.com.  J. Huang is with the School of Science and Engineering, The Chinese University of Hong Kong, Shenzhen, and the Shenzhen Institute of Artificial Intelligence and Robotics for Society (corresponding author, e-mail: jianweihuang@cuhk.edu.cn). R. A. Berry is with the Department of Electrical and Computer Engineering, Northwestern University, USA; Email: rberry@northwestern.edu. This work is supported by the Shenzhen Institute of Artificial Intelligence and Robotics for Society, and the Presidential Fund from the Chinese University of Hong Kong, Shenzhen. The work of Haoran Yu is supported by Beijing Institute of Technology Research Fund Program for Young Scholars.
		}
	}
	
	%
	
	\vspace{-10mm}
	\maketitle
	
	
	\begin{abstract}
		
		We study a crowdsourcing problem where the platform aims to incentivize distributed workers to provide high-quality and truthful solutions without the ability to verify the solutions.
		While most prior work assumes that the platform and workers have symmetric information, we study an asymmetric information scenario where the platform has informational advantages. Specifically, the platform knows more information regarding workers' average solution accuracy, and can strategically reveal such information to workers. Workers will utilize the announced information to determine the likelihood that they obtain a reward if exerting effort on the task.
		We study two types of workers: (1) \textit{naive} workers who fully trust the announcement, and (2) \textit{strategic} workers who update prior belief based on the announcement. For naive workers, we show that the platform should always announce a high average accuracy to maximize its payoff. 
		However, this is not always optimal for strategic workers, as it may reduce the credibility of the platform's announcement and hence reduce the platform's payoff. Interestingly, the platform may have an incentive to even announce an average accuracy lower than the actual value when facing strategic workers. Another counter-intuitive result is that the platform's payoff may decrease in the number of high-accuracy workers. 
		
		
	\end{abstract}
	

	%
	\IEEEpeerreviewmaketitle

	\section{Introduction}
	\subsection{Motivations}
	The rapid growth of the Internet has enabled crowdsourcing of various online tasks \cite{akimoto2019crowdsourced,tian2018mobicrowd}.  Through appropriately assigning decomposed simple tasks to workers and effectively aggregating workers' solutions, a crowdsourcing platform can manage to solve the original complex problem \cite{yin2017task}. 
	For example, Waze invites workers to report local traffic information and uses the reports to recommend driving routes \cite{waze}.  Amazon Mechanical Turk recruits workers to do image labeling, and uses the collected labels to train machine learning models \cite{Mturk}. 

	Even for a simple task, obtaining a high-quality solution requires a worker to exert enough effort. A platform needs to provide proper incentives
	to motivate this \cite{jin2019data}. The design of incentives is particularly difficult when the
	platform cannot access the ground truth to verify the workers’ reported solutions. For example, the ground truth
	can be costly to obtain (e.g., it is costly for Waze to judge the accuracy
of the mobile workers' reported traffic information, as it requires
centralized managed sensors to validate all the submitted reports). This type of challenging crowdsourcing problem is known as information elicitation without verification
	(IEWV) \cite{kong2018water}.
	

	Most IEWV literature (e.g., \cite{waggoner2014output,miller2005eliciting,li2019service,frongillo2015elicitation,shnayder2016informed}) studied the problem as a game with symmetric information. The common assumption is that both the workers and the platform have the same information regarding the environment, e.g., worker capabilities.  
	In many crowdsourcing platforms, however, information regarding the worker characteristics is  \textit{asymmetric} between the platform and workers. Usually, the platform has more information regarding worker characteristics through market research and past experiences.  For example, in Amazon Mechanical Turk, each worker's historical performance is known by the platform, but not by the other workers \cite{Mturk}.  
	In this paper,
	we consider information asymmetry between the platform and the workers.  As will be seen, such information asymmetry complicates the analysis of both the workers' behaviors and the platform's incentive design.
	
	\begin{figure*}
		\centering
		\vspace{-5mm}
		\includegraphics[width=5.8in]{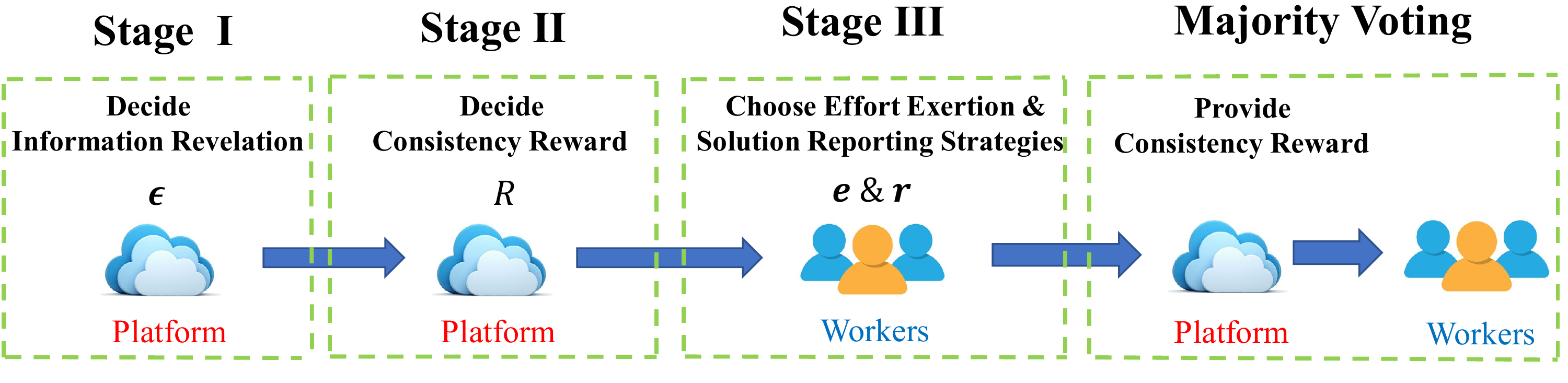}
		\caption{Interactions between platform and workers.}
		\label{SystemModel}
		\vspace{-3mm}
	\end{figure*}

	We apply the widely adopted  \textit{majority voting} scheme to solve the IEWV problem \cite{8387487,liu2016learning,chaoGC19}. Specifically, a worker obtains a \textit{consistency reward} if his solution matches the majority solution from the other workers.  Besides the reward design under majority voting, we assume that the platform has one additional decision: information revelation.
	 Specifically, we consider a mix of high and low accuracy workers, where the platform knows the number of each type  (but the workers do not know). 
	 \textbf{We are particularly interested in studying whether the platform has an incentive to reveal this information to workers. Furthermore, we also want to study whether the platform may manipulate the revealed information}, as truthful revelation may not benefit the platform or the workers \cite{crawford1982strategic,argenziano2016strategic,kamenica2011bayesian,kurschilgen2019communication}. 
	By analyzing the platform's optimal information revelation decisions, we can  understand how the platform takes advantage of this information asymmetry. 
 


	We model the interactions between the platform and the workers as follows (illustrated in Fig. \ref{SystemModel}): 
	\begin{enumerate}
		\item \textit{Stage I}: The platform decides the information revelation strategy. 
		\item \textit{Stage II}: The platform decides the consistency reward under the majority voting scheme.
		\item \textit{Stage III}: The workers choose whether to exert effort finishing the task and whether to truthfully report solutions.
	\end{enumerate}

	After workers finish the task and report their solutions in Stage III,  majority voting is implemented. Specifically, the platform collects the workers' reported solutions and allocates consistency rewards to the workers whose solutions are consistent with the majority. Note that the majority voting's results are directly determined by the previous three stages, and hence we do not treat it as a separate stage.
	
		We study two types of workers: (1) \textit{naive} workers who fully believe  in the platform's announcement, and (2) \textit{strategic} workers who update their prior belief based on the announced information.  The consideration of naive worker case serves as a benchmark and it can model the scenario where workers are confident in the platform's announced information (especially for those platforms with good reputations). It can also model the scenario where workers have limited reasoning capabilities to deduce the authenticity of the announcement  \cite{shao2019multimedia}. The consideration of strategic worker case, however, leads to more intriguing results.  Such a consideration fits the scenario where workers do not trust the platform and  where workers are strategic and have high reasoning capabilities  \cite{chaoWiOpt20}. 
	As will be shown, the platform's optimal information revelation strategies are very different when facing these two types of workers. 
	\subsection{Key Contributions}
	
	The main contributions of this paper are as follows.
	\begin{itemize}
		\item \emph{Studying strategic information revelation for the IEWV problem}: 
		To the best of our knowledge, this is the first analytical work that studies strategic revelation of asymmetric information for the IEWV problem. 
		The platform's information revelation is a challenging non-convex problem, yet we can exploit its special structure to characterize the optimal solutions' properties.

		\item \emph{Characterizing workers' equilibrium strategies}: We show that there are multiple equilibria among the workers. Under appropriate information revelation and reward design, it is a Pareto-dominant equilibrium for all the workers to exert effort and truthfully report their solutions.
		
		\item \emph{Characterizing platform's information revelation strategy}: We show that for naive workers, the platform should always announce a high average worker accuracy, independent of the actual information. 	However, this is not always optimal when facing strategic workers, as it may reduce the credibility of the platform's announcement and hence reduce the platform's payoff. When facing strategic workers, the platform may even have an incentive to announce an average accuracy lower than the actual value.
		\item \emph{Performance evaluation}: We evaluate our mechanism via extensive numerical experiments. We show that the platform payoff increases in the workers' prior belief regarding the number of high-accuracy workers. Surprisingly, the platform payoff may decrease in the number of high-accuracy workers. A larger number of these workers brings marginally decreasing benefits, but the cost required to incentivize them may grow significantly.
		
	\end{itemize}
	
	The remaining of the paper is organized as follows. In Section II, we review the related work.  In Section III, we introduce the model. In Section IV, we provide complete analytical solutions to the model. We show numerical results in Section V and conclude in Section VI.


	\section{Related Work}
	Our work studies strategic information revelation for IEWV. Hence, we review the related work from two aspects, i.e., the IEWV problem and strategic information revelation.
	
	\subsection{Information Elicitation Without Verification (IEWV)}
	IEWV  is a canonical crowdsourcing problem where the platform cannot verify the workers' reported solutions (e.g., \cite{witkowski2015robust,prelec2004bayesian,waggoner2014output,chaoGS19}). 
	The key focus in IEWV is to design proper incentive mechanisms to encourage high-quality (and truthful) solutions. 
	Miller \textit{et al.} in \cite{miller2005eliciting} and Prelec \textit{et al.} in \cite{prelec2004bayesian} proposed peer prediction mechanisms to elicit truthful solutions from the workers. Dasgupta \textit{et al.} in \cite{dasgupta2013crowdsourced} and Liu \textit{et al.} in \cite{liu2016learning} focused on designing mechanisms to induce workers' effort exertion, which leads to high-quality solutions. Huang  \textit{et al.} in \cite{chaoGC19,chaoWiOpt20} studied the impact of worker heterogeneity on the mechanism design for IEWV. However, these works did not consider information asymmetry and strategic information revelation between the platform and workers.  \textit{Our work is the first analytical study to characterize the impact of strategic  revelation of asymmetric information on the mechanism design for the IEWV problem.}
	
	\subsection{Strategic Information Revelation}
	Strategic information revelation investigates how the players in a game strategically reveal and process information to maximize their payoffs \cite{kurschilgen2019communication,frug2018strategic,hedlund2017bayesian}. Crawford \textit{et al.} in \cite{crawford1982strategic} explored the famous cheap talk problem and characterized the optimal structures of revealed information that maximizes the sender's payoff. Brocas \textit{et al.} in \cite{brocas2007influence} and Kamenica \textit{et al.} in \cite{kamenica2011bayesian} studied the persuasion game considering costly information acquisition and revelation. However, these works either made the strong assumption that the platform cannot lie  \cite{crawford1982strategic} or that the information revelation is the only platform decision \cite{kamenica2011bayesian}. \textit{Different from these works, we consider a general model where the platform can choose to be honest or lying. Further, we jointly consider the platform's strategic decisions on information revelation and reward design. Both the above considerations significantly complicate the analysis.}

	\section{Model}\label{model}
	
	In Section \ref{worker_problem}, we introduce the workers' decisions and payoffs. In Section \ref{platform_problem}, we introduce the platform's decisions and payoff, with an emphasis on the platform's information revelation strategy. 
	
	\subsection{Workers' Decisions and Payoffs}\label{worker_problem}
	
	In this subsection, we first introduce the task and workers, and then define each worker's strategies and payoff function.
	
	\subsubsection{Task and Workers}
	A crowdsourcing platform aims to obtain solutions to a task via a set  $\mathcal{N}=\{1, 2, \cdots, N\}$ of workers.
	We consider a binary-solution task, e.g.,  judging whether the solution to an online math problem is \textit{Correct} or \textit{Wrong}. Let  $\mathcal{X}= \{1,-1\}$  denote the task's solution space, where $1$ means \textit{Correct} and $-1$ means \textit{Wrong},\footnote{Many crowdsourcing applications focus on binary tasks, e.g., image labeling and online content moderation, which draw extensive attention in literature \cite{chaoGC19,liu2016learning,chaoWiOpt20}. Besides, our model can be extended to the scenario where a task has more than two possible solutions, i.e., by decomposing a multi-solution task into several binary-solution tasks. For example,  eliciting opinions of  workers
		from three alternatives, i.e., good, bad or average on the
		quality evaluation of an online article is a three-solution task. We
		can decompose it into three binary-solution tasks, based on whether a worker's evaluation is good or not, bad or not, and average or not.}
	and $x\in \mathcal{X}$  is the task's true solution that the platform does not know. After completing the task, each worker  $i$ generates an estimated solution  $x_{i}^{\rm estimate} \in \mathcal{X}$, and he can report a value $x_{i}^{\rm report} \in \mathcal{X}$ to the platform that may or may not be the same as $x_{i}^{\rm estimate}$.

	\subsubsection{Worker Effort Exertion Strategy} Each worker can decide whether to exert effort doing the task, and the accuracy (i.e., quality) of his solution stochastically depends on his chosen effort level. Specifically, a worker can choose to either exert effort or not exert effort, and we use $e_{i} \in \{0,1\}$ to denote worker $i$'s effort level \cite{chaoWiOpt20,chaoGC19,chaoGS19}.  If worker $i$ does not exert effort,  i.e., $e_i=0$, he will generate the correct solution (which is the task's true solution) with probability $0.5$ at zero cost. Here, we assume that without exerting effort a worker has no information about the true solution, so the estimated solution is equally likely to be correct or wrong \cite{chaoGC19,chaoGS19}.\footnote{We can extend our analysis to the scenario where even without any effort a worker still has some information about the true solution. In this scenario, the solution would always be more accurate than random guessing \cite{liu2016learning,chaoWiOpt20}.} Exerting effort (i.e., $e_i=1$) improves a worker's solution accuracy at a cost $c\ge 0$, and he can generate the correct solution with probability $p_i\in (0.5, 1]$. 
	More specifically,
	\begin{equation}\label{effortmodel}
	\begin{aligned}
	\hspace{-3mm}Pr(x_{i}^{\rm estimate}=x)=
	\begin{cases}
	0.5, \quad &\text{if} \; e_{i}=0 \; (\text{with zero cost}),\\
	p_{i}, \quad &\text{if} \; e_{i}=1 \; (\text{with a cost} \; c \ge 0).
	\end{cases}
	\end{aligned}
	\end{equation}
	In this paper, we consider heterogeneous workers in which there are $k$ out of $N$  workers with a high accuracy level $p_h$ and the remaining $N-k$ workers have a low accuracy level $p_l$, where $0 \le k \le N$ and $0.5<p_l<p_h\le 1$.
	We use $\mathcal{N}_h$ and $\mathcal{N}_l$ to denote the set of high-accuracy and low-accuracy workers, respectively.
 \subsubsection{Worker Solution Reporting Strategy}
	Each worker also needs to decide whether to truthfully report his  solution to the platform. 
	 For a worker $i$ who does not exert effort, he can only apply the random reporting strategy denoted by $r_i=\rm{rd}$.\footnote{The $\rm{rd}$ strategy is adopted for ease of exposition. In fact, if a worker exerts no effort, his solution is equally likely to be correct or wrong.  Hence, one could equivalently view that the worker is either truthfully reporting its solution or untruthfully reporting it.} For those workers who exert effort, they can either truthfully or untruthfully report their solutions, where
	we use $r_i \in \{1,-1\}$ to denote the reporting strategy  with $r_i = 1$ indicating truthful reporting and $r_i = -1$ indicating untruthful reporting.
	 More specifically, 
	\begin{equation}
	x_i^{\rm report}=
	\begin{cases}
	x_i^{\rm estimate}, \quad & \text{if} \; r_i=1,\\
	-x_i^{\rm estimate}, \quad & \text{if} \; r_i=-1,\\
	1 \;\text{or}\; -1 \; \text{with an equal probability}, \quad & \text{if} \; r_i={\rm rd}.
	\end{cases}
	\end{equation}
In fact,  workers can benefit from colluding to always report  $1$ (or $-1$) as the task solution, under the majority voting scheme.  However, such colluding strategies require much coordination among workers. This may not be possible in an online crowdsourcing platform where workers are temporally and spatially separated with limited communications.  In addition, workers engaged in such strategies could be detected by the platform and removed. Hence, we consider that workers report their solutions independently and restrict their solution reporting strategies to $\{{\rm rd}, 1,-1\}$  \cite{liu2016learning,chaoGC19,chaoGS19}. 
	
	For ease of exposition, we use $s_i\triangleq (e_i, r_i)$ to denote worker $i$'s effort exertion and reporting strategy with  $s_i \in \mathcal{S}_i\triangleq\{(0, \rm{rd}), (1,1), (1,-1)\}$.
	
	

	\subsubsection{Consistency Reward for Majority Voting} After workers complete the task and report their solutions, the platform compares each worker $i$'s reported solution $x_i^{\rm report}$ with the majority solution from the remaining workers. If they are aligned,
	 worker $i$ will receive a \textit{consistency reward} $R \ge 0$.\footnote{In case a tie is incurred in the majority solution, we assume without loss of generality that, worker $i$ still obtains the reward.} 
%
%
%
	We use $G_{i}(\boldsymbol{s};\boldsymbol{\epsilon})$ to denote the probability of worker $i$ receiving $R$,  where 
	$\boldsymbol{s}=((e_{i}, r_i), \forall i \in \mathcal{N})$.  Note that $G_{i}\left(\boldsymbol{s};\boldsymbol{\epsilon}\right)$ is also a function of the platform's information revelation strategy $\boldsymbol{\epsilon}$, which will be explained in Section \ref{platform_problem}.

	\subsubsection{Worker Payoff} We define each worker $i$'s expected  payoff  as 
	\begin{equation}\label{worker_payoff}
	\begin{aligned}
	u_i\left( \boldsymbol{s}; \boldsymbol{\epsilon}, R\right)=  G_{i}\left(\boldsymbol{s}; \boldsymbol{\epsilon}\right) \cdot R-e_{i} \cdot c,
	\end{aligned}
	\end{equation}
	where $ G_{i}\left(\boldsymbol{s};\boldsymbol{\epsilon}\right) \cdot R$ represents the expected consistency reward 
	 and $-e_{i} \cdot c$ represents the cost for effort exertion.
	
	\subsection{Platform's Decisions and Payoff}\label{platform_problem}
	In this subsection, we first define the platform's strategies for information revelation and reward design, and then define its payoff function.
	
	\subsubsection{Platform Information Revelation Strategy} The platform has additional information regarding the workers' solution accuracy distribution, i.e., the number of high-accuracy workers $k$. With this informational advantage, the platform can strategically announce the worker
 information to induce desired worker behavior and maximize its payoff.
	
	The information asymmetry and revelation between the platform and the workers is modeled using a Bayesian persuasion framework \cite{crawford1982strategic,argenziano2016strategic,kamenica2011bayesian,kurschilgen2019communication} as follows:
	\begin{itemize}
		\item Step 1:  neither the platform nor the workers know $k$, and the platform must commit to a long-term information revelation strategy.
		\item Step 2: the value of $k$ is realized and observed by the platform, but not the workers.
		\item Step 3: the platform announces a value $k_p^{\rm anu}$ that may be different from the real value of $k$ to the workers according to its previously committed strategy.
	\end{itemize}    
	
	Next, we elaborate Step 1 to Step 3 in the following:
	
	\textbf{Step 1:} neither the platform nor workers know $k$. We consider they know the distribution of $k$, which constitutes their common prior belief $\boldsymbol{\mu}^{\rm prior}=\left(\mu^{\rm prior}_{\rm high}, \mu^{\rm prior}_{\rm low}\right)$, where
	\begin{equation}\label{prior}
	\mu^{\rm prior}_{\rm high}=Pr\left(k=k^{\rm high}\right), \;\; \mu^{\rm prior}_{\rm low}=Pr\left(k=k^{\rm low}\right),
	\end{equation}
	$\mu^{\rm prior}_{\rm high}+\mu^{\rm prior}_{\rm low}=1$, and  $k^{\rm high}$ and $k^{\rm low}$ are two possible values of $k$ satisfying $0\le k^{\rm low}< k^{\rm high}\le N$. Note that $\boldsymbol{\mu}^{\rm prior}$ represents the workers' prior belief \textit{before} the platform announces $k_p^{\rm anu}$. In practice, the workers can form such a prior belief via exploring the platform's feedback and reputation systems \cite{jagabathula2014reputation}.
	\textit{It is important to note that in (\ref{prior}), the consideration of a two-point distribution for $k$ is just for the ease of exposition. Our analysis and results are applicable to the case when there are arbitrarily finite realizations for $k$. }
	
	Before $k$ is realized, the platform commits to an information revelation strategy. The platform may have many tasks in practice, and for each task, it may face a different worker population (i.e., different $k$). Before workers arrive (i.e., before $k$ is realized), the platform determines the information revelation strategy, and will commit to this long-term strategy, which helps it build a good reputation \cite{kreps1982reputation,kamenica2011bayesian,camara2019avoiding,aumann1995repeated,alonso2008optimal}.\footnote{Information revelation in organizations often involve commitment, formally through contracts, or informally through reputation \cite{kamenica2011bayesian}. Nevertheless, we can also analyze the case where the platform can choose different information revelation strategies for different tasks and worker populations.} 
	
	Specifically, we use $\boldsymbol{\epsilon}\triangleq\left(\epsilon^h, \epsilon^l\right) \in [0,1]^2$ to denote the platform's information revelation strategy. We assume that the platform announces $k_{p}^{\rm anu}=k^{\rm high}$ with probability $\epsilon^h$ when $k=k^{\rm low}$.  
	Moreover, we assume that the platform announces $k_p^{\rm anu}=k^{\rm low}$ with probability $\epsilon^l$ when $k=k^{\rm high}$.\footnote{Considering a positive $\epsilon^l$ seems to be counter-intuitive, because one might expect that the platform should reveal a high overall worker capability (e.g., announce $k^{\rm high}$ instead of $k^{\rm low}$) to impress the workers. An analogous example is that a seller would  exaggerate the quality of its item to attract more buyers. We will later show that the platform indeed may have the incentive to choose a positive $\epsilon^l$ in some cases.}  Both $\epsilon^h$ and $\epsilon^l$ are the platform's deception probabilities in different cases. 
  We summarize the dependence of $k_p^{\rm anu}$ on $k$ and $\boldsymbol{\epsilon}$  as follows:
	\begin{equation}\label{epsilonh}
	\begin{aligned}
	\begin{cases}
	&Pr\left(k_p^{\rm anu}=k^{\rm high}|k=k^{\rm low}\right)=\epsilon^h,\\     
	&Pr\left(k_p^{\rm anu}=k^{\rm low}|k=k^{\rm low}\right)=1-\epsilon^h,\\
	&Pr\left(k_p^{\rm anu}=k^{\rm high}|k=k^{\rm high}\right)=1-\epsilon^l,\\
	&Pr\left(k_p^{\rm anu}=k^{\rm low}|k=k^{\rm high}\right)=\epsilon^l.
	\end{cases}
	\end{aligned}
	\end{equation} 
%
	
	\textbf{Step 2:} $k$ is realized according to its distribution (which is also the common prior) $\boldsymbol{\mu}^{\rm prior}$ \cite{kamenica2011bayesian,frug2018strategic} and the value of $k$ is observed by the platform (e.g., via market research), but not the workers. More specifically, once the crowdsourcing task is announced and workers have signed up the task, the platform can know the workers' capabilities (i.e., the value of $k$) from their past performances.
	
	\textbf{Step 3}: given $k$, the platform will announce $k_p^{\rm anu}$ to the workers according to the committed strategy specified in (\ref{epsilonh}). The platform will strategically choose $\boldsymbol{\epsilon}$, which  affects the workers' posterior belief regarding $k$, and hence affects the platform's reward design and payoff. Note that  workers can learn the platform's information revelation strategy via repeated interactions with the platform \cite{tang2019incentive}. They can also learn such strategy via exploring the platform's feedback and reputation systems \cite{yu2020crowdr}.
	\subsubsection{Platform Reward Design Strategy}
	In addition to the information revelation strategy $\boldsymbol{\epsilon}$, the platform also needs to decide the consistency reward per worker $R\ge 0$ to incentivize high-quality and truthful solutions. As mentioned earlier, after workers report their solutions, the platform will distribute consistency rewards to workers whose solutions match the majority. The decisions $\boldsymbol{\epsilon}$ and $R$ are complexly coupled, as $\boldsymbol{\epsilon}$ affects the workers' posterior belief, which together with $R$ determine the worker equilibrium outcome.
	
	\subsubsection{Platform Payoff}
	The platform aims to achieve a good tradeoff between the accuracy (i.e., quality) of the aggregated solution and the cost of incentivizing the workers \cite{liu2016learning,chaoGC19,chaoGS19,9244604}. Specifically, we define the platform's payoff as follows:

	\vspace{-2mm}
	\begin{equation}\label{platformpayofffunction}
	U_p (\boldsymbol{\epsilon}, R, k; \boldsymbol{s})= \beta  P_a(\boldsymbol{\epsilon},R,k; \boldsymbol{s}) - {\mathbb E} \left\{ R^{tot} (\boldsymbol{\epsilon}, R,k; \boldsymbol{s})\right\},
	\end{equation} 
	where $P_a(\boldsymbol{\epsilon},R,k; \boldsymbol{s}) $ denotes the accuracy of the aggregated solution from the workers, i.e., the probability that the aggregated solution matches the true task solution. We will use the widely adopted \textit{majority rule} \cite{8387487,chaoGC19} to calculate this probability. The parameter $\beta >0$ represents the platform's valuation of  the aggregated solution's accuracy. The term ${\mathbb E} \left\{ R^{tot} (\boldsymbol{\epsilon}, R,k; \boldsymbol{s})\right\}$ captures the total expected consistency rewards.

	\vspace{-0.2mm}
	
	\section{Solving Three-Stage Model}\label{naiveworkers}
	In this section, we solve the three-stage game via backward induction. 
	Specifically, we will solve the workers' decisions, the platform's reward design, and the platform's information revelation in Sections \ref{stage3}, \ref{stage2}, and \ref{stage1}, respectively. 
	
	\textit{To save space, we only consider strategic workers in this section, as the analysis is more challenging and results are more intriguing. We will later provide numerical results for naive workers in Section \ref{numerical}, which serves as a benchmark.}
	
	\subsection{Worker Equilibrium Behaviors in Stage III}\label{stage3}
	Given $\boldsymbol{\epsilon}$ and $R$, each worker chooses his effort exertion and solution reporting strategies $s_i$ to maximize his own payoff.
	\subsubsection{Worker Belief Update}  
	Before the platform announces $k_p^{\rm anu}$, workers have prior belief $\boldsymbol{\mu}^{\rm prior}$ regarding $k$ (see (\ref{prior})). After the platform announces $k_p^{\rm anu}$, the workers form posterior belief based on the prior belief and the announcement.  Proposition \ref{stg_post} computes the posterior belief for strategic workers.
	
	\begin{Pro} \label{stg_post}{(Posterior Belief for Strategic Workers)} 
		Let $\mu^{\rm post, str}_{w}|_{k_p^{\rm anu}}$ denote a strategic worker's  posterior belief in $k^w$ conditional on the platform's announcement $k_p^{\rm anu}$, where $w \in \left\{{\rm high}, {\rm low}\right\}$. Then, we have
	\begin{equation}\label{belief_h_h}
	\begin{aligned}
	\mu^{\rm post, str}_{\rm high}|_{k^{\rm high}}\left(\boldsymbol{\epsilon}\right)
	=\frac{\left(1-\epsilon^l\right)\mu^{\rm prior}_{\rm high}}{ \left(1-\epsilon^l\right)\mu^{\rm prior}_{\rm high}+\epsilon^h\mu^{\rm prior}_{\rm low}},
	\end{aligned}
	\end{equation}
		\begin{equation}\label{post_l_h}
	\begin{aligned}
	\mu^{\rm post, str}_{\rm low}|_{k^{\rm high}}\left(\boldsymbol{\epsilon}\right)
	=\frac{\epsilon^h\mu^{\rm prior}_{\rm low}}{ \left(1-\epsilon^l\right)\mu^{\rm prior}_{\rm high}+\epsilon^h\mu^{\rm prior}_{\rm low}},
	\end{aligned}
	\end{equation}
		\begin{equation}\label{post_l,l}
	\begin{aligned}
	\mu^{\rm post, str}_{\rm high}|_{k^{\rm low}}\left(\boldsymbol{\epsilon}\right)
	=\frac{\epsilon^l\mu^{\rm prior}_{\rm high}}{ \epsilon^l\mu^{\rm prior}_{\rm high}+\left(1-\epsilon^h\right)\mu^{\rm prior}_{\rm low}},
	\end{aligned}
	\end{equation}
	\begin{equation}\label{belief_l_l}
	\begin{aligned}
	\mu^{\rm post, str}_{\rm low}|_{k^{\rm low}}\left(\boldsymbol{\epsilon}\right)
	=\frac{\left(1-\epsilon^h\right)\mu^{\rm prior}_{\rm low}}{ \epsilon^l\mu^{\rm prior}_{\rm high}+\left(1-\epsilon^h\right)\mu^{\rm prior}_{\rm low}}.
	\end{aligned}
	\end{equation}
	\end{Pro}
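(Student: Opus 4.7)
The plan is to derive each of the four posterior beliefs by a direct application of Bayes' rule, combining the common prior in (\ref{prior}) with the committed conditional announcement probabilities in (\ref{epsilonh}). Because strategic workers know the platform's information revelation strategy $\boldsymbol{\epsilon}$ as common knowledge (it is a committed long-term strategy), they can treat each announcement $k_p^{\rm anu}$ as a signal whose likelihood under each realization of $k$ is fully specified, and then perform a textbook Bayesian update on the binary state $k \in \{k^{\rm high}, k^{\rm low}\}$.

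Concretely, I first compute the marginal probability of the announcement $k_p^{\rm anu}=k^{\rm high}$ via the law of total probability:
\begin{equation*}
Pr\left(k_p^{\rm anu}=k^{\rm high}\right) = (1-\epsilon^l)\,\mu^{\rm prior}_{\rm high} + \epsilon^h\,\mu^{\rm prior}_{\rm low},
\end{equation*}
obtained by summing the two relevant conditional probabilities in (\ref{epsilonh}) weighted by the prior. Bayes' rule, in the form $Pr(k=k^w \mid k_p^{\rm anu}) = Pr(k_p^{\rm anu} \mid k=k^w)\,Pr(k=k^w)/Pr(k_p^{\rm anu})$, then yields (\ref{belief_h_h}) with numerator $(1-\epsilon^l)\,\mu^{\rm prior}_{\rm high}$ and (\ref{post_l_h}) with numerator $\epsilon^h\,\mu^{\rm prior}_{\rm low}$; as a sanity check, these two posteriors sum to one. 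I would then repeat the identical two-line computation for the announcement $k_p^{\rm anu}=k^{\rm low}$, whose marginal equals $\epsilon^l\,\mu^{\rm prior}_{\rm high} + (1-\epsilon^h)\,\mu^{\rm prior}_{\rm low}$, to obtain (\ref{post_l,l}) and (\ref{belief_l_l}).

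The last step is to check that the conditional probabilities are well-defined, i.e., that each denominator is strictly positive on the relevant domain of $\boldsymbol{\epsilon} \in [0,1]^2$. A denominator vanishes only at degenerate boundary strategies under which the corresponding announcement is never issued (for instance, $\epsilon^h=0$ together with $\epsilon^l=1$ eliminates the announcement $k^{\rm high}$ entirely), in which case the posterior is defined by convention as an off-equilibrium belief and plays no role in the subsequent equilibrium analysis. There is no real technical obstacle here; the main things to be careful about are correctly matching the direction of conditioning in (\ref{epsilonh}) to the right state when forming each numerator, and explicitly invoking common knowledge of the committed strategy $\boldsymbol{\epsilon}$ so that strategic workers can actually carry out the Bayesian inversion.
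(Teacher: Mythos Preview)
Your proposal is correct and follows essentially the same approach as the paper: a direct application of Bayes' rule combining the prior in (\ref{prior}) with the conditional announcement probabilities in (\ref{epsilonh}). The paper's proof spells out the computation for (\ref{belief_h_h}) and then says the remaining three follow similarly; your additional remarks on common knowledge of $\boldsymbol{\epsilon}$ and on the degenerate boundary cases where a denominator vanishes are useful elaborations but do not change the method.
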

\begin{proof}
	We apply the Bayes' rule to compute the above probabilities.  We show the derivation of (\ref{belief_h_h}) as follows:
	\begin{displaymath}
	\begin{aligned}
    &\mu^{\rm post, str}_{\rm high}|_{k^{\rm high}}\left(\boldsymbol{\epsilon}\right)=Pr(k=k^{\rm high}|k_p^{\rm anu}=k^{\rm high})\\
    &=\frac{Pr(k_p^{\rm anu}=k^{\rm high}|k=k^{\rm high})\cdot Pr(k=k^{\rm high})}{Pr(k_p^{\rm anu}=k^{\rm high})}\\
    &=\frac{Pr(k_p^{\rm anu}=k^{\rm high}|k=k^{\rm high})\cdot Pr(k=k^{\rm high})}{\sum_{\bar{k}\in \left\{k^{\rm high}, k^{\rm low}\right\}} \left(Pr(k_p^{\rm anu}=k^{\rm high}|k=\bar{k})\cdot Pr(k=\bar{k})\right)}\\
    &=\frac{\left(1-\epsilon^l\right)\mu^{\rm prior}_{\rm high}}{ \left(1-\epsilon^l\right)\mu^{\rm prior}_{\rm high}+\epsilon^h\mu^{\rm prior}_{\rm low}}.
	\end{aligned}
	\end{displaymath}
	Similarly, one can compute the probabilities in (\ref{post_l_h})-(\ref{belief_l_l}). 
\end{proof}
	 Note that $\mu^{\rm post, str}_{\rm high}|_{k^{\rm high}}\left(\boldsymbol{\epsilon}\right)$ in (\ref{belief_h_h}) represents the posterior belief of a strategic worker regarding $k=k^{\rm high}$ conditional on the platform's announcement $k_{p}^{\rm anu}=k^{\rm high}$, and it is a function of $\boldsymbol{\epsilon}$  and the prior $\boldsymbol{\mu}^{\rm prior}$.  One can see that  $\mu^{\rm post, str}_{\rm high}|_{k^{\rm high}}\left(\boldsymbol{\epsilon}\right)$ in (\ref{belief_h_h}) decreases in $\epsilon^h$. This indicates that if the platform is more likely to deceive the workers via announcing $k_{p}^{\rm anu}=k^{\rm high}$ when  $k=k^{\rm low}$ (i.e., a larger $\epsilon^h$), the strategic workers hearing $k_p^{\rm anu}=k^{\rm high}$ will doubt the platform and are less likely to believe  $k=k^{\rm high}$ (which leads to a smaller $\mu^{\rm post, str}_{\rm high}|_{k^{\rm high}}\left(\boldsymbol{\epsilon}\right)$).  Similarly, $\mu^{\rm post, str}_{\rm low}|_{k^{\rm low}}\left(\boldsymbol{\epsilon}\right)$  in (\ref{belief_l_l}) decreases in $\epsilon^l$. If the platform is more likely to lie announcing $k_p^{\rm anu}=k^{\rm low}$, the strategic workers hearing this are less likely to believe $k=k^{\rm low}$ (i.e., a smaller $\mu^{\rm post, str}_{\rm low}|_{k^{\rm low}}\left(\boldsymbol{\epsilon}\right)$).

	\subsubsection{Worker Equilibrium Strategy}
	When the workers play the effort exertion and solution reporting game, they use their posterior belief to calculate their expected payoffs.
		Next, we characterize the workers' equilibrium decisions. Similar to prior IEWV literature (e.g., \cite{liu2016learning,dasgupta2013crowdsourced,chaoGC19,chaoWiOpt20,chaoGS19}), we focus on symmetric Nash equilibria (SNE), where workers with the same type (i.e., solution accuracy) play the same strategy. 
	
	\begin{figure*}
		\vspace{-2mm}
		\begin{equation}\label{condition_psne}
		\begin{aligned}
		\frac{2p_h-1}{2p_l-1}\left(\mu^{\rm post, str}_{\rm high}|_{k^{\rm anu}_p}\left(\boldsymbol{\epsilon}\right)P^{\rm majority}_{k^{\rm high}-1}+\mu^{\rm post, str}_{\rm low}|_{k^{\rm anu}_p}\left(\boldsymbol{\epsilon}\right)P^{\rm majority}_{k^{\rm low}-1}\right)
		\ge \mu^{\rm post, str}_{\rm high}|_{k^{\rm anu}_p}\left(\boldsymbol{\epsilon}\right)P^{\rm majority}_{\rm high}+\mu^{\rm post, str}_{\rm low}|_{k^{\rm anu}_p}\left(\boldsymbol{\epsilon}\right) P^{\rm majority}_{k^{\rm low}}.
		\end{aligned}
		\end{equation}
		\rule[-12pt]{\textwidth}{0.05em}
		\vspace{-5mm}
	\end{figure*}
	
	For the ease of exposition, we first define some terminology related to the worker equilibria in Stage III.
	\begin{Def}\label{def}{(Stage III Equilibrium Types)}
		(i) An $n$-SNE is defined as the  profile $(s_i^*=(0, {\rm{rd}}), \forall i\in \mathcal{N})$, where no worker exerts effort and truthfully reports.\\
		(ii) An  $f$-SNE is defined as the profile $(s_i^*=(1,1), \forall i\in \mathcal{N})$, where all the workers {\it fully} exert effort and truthfully report.\\
		(iii) A $p$-SNE is defined as the profile $(s_i^*=(1,1), \forall i \in \mathcal{N}_h, s_j^*=(0,{\rm{rd}}), \forall j \in \mathcal{N}_l)$, where high-accuracy workers exert effort and truthfully report, and low-accuracy workers exert no effort and randomly report.
	\end{Def}
	
%


Note that the workers’ equilibrium behaviors depend on the platform’s
information announcement $k_p^{\rm anu}$, which determines the workers’
posterior belief. Theorem \ref{Equilibrium_SNE} characterizes the possible equilibria among strategic workers.

	\begin{The}\label{Equilibrium_SNE}{(Worker Equilibria in Stage III)} \\
	 (i) Given any $\boldsymbol{\epsilon}\in [0,1]^2$, an $n$-SNE exists if $R \ge 0$.\\
	 (ii) Given any $\boldsymbol{\epsilon}\in [0,1]^2$, there always exists a threshold $R_f^{\rm str}\left(\boldsymbol{\epsilon}, k_p^{\rm anu}\right)>0$, such that an $f$-SNE exists if and only if $R \ge R_f^{\rm str}\left(\boldsymbol{\epsilon}, k_p^{\rm anu}\right)$.\\
	 (iii) When $\boldsymbol{\epsilon} \in \Phi \triangleq \left\{\boldsymbol{\epsilon} \in [0,1]^2 \big|\; \text{condition}\; (\ref{condition_psne})  \; \text{holds}\right\}$, there exist two thresholds $0<R_{pl}^{\rm str}\left(\boldsymbol{\epsilon}, k^{\rm anu}_p\right)\leq R_{ph}^{\rm str}\left(\boldsymbol{\epsilon},k^{\rm anu}_p\right)$, such that a $p$-SNE exists if and only if $R_{pl}^{\rm str}\left(\boldsymbol{\epsilon},k^{\rm anu}_p\right)\le R \le R_{ph}^{\rm str}\left(\boldsymbol{\epsilon},k^{\rm anu}_p\right)$, where the condition (\ref{condition_psne}) is shown at the top of the next page.	
	\end{The}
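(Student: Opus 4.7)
The plan is to verify each of the three equilibrium types by checking one-shot deviations against the other two strategies in $\mathcal{S}_i$. Before addressing the three parts, I would establish a compact expression for the consistency probability $G_i(\boldsymbol{s};\boldsymbol{\epsilon})$. Conditioning on the true solution $x$ and using the independence of workers' estimates, $G_i$ factors as $q_i P^{\rm maj}_{-i}+(1-q_i)(1-P^{\rm maj}_{-i})$, where $q_i$ is the probability that worker $i$'s report matches $x$ and $P^{\rm maj}_{-i}$ is the probability that the majority among the other $N-1$ workers matches $x$. From a strategic worker's perspective, $P^{\rm maj}_{-i}$ is a convex combination of the two terms $P^{\rm majority}_{k^{\rm high}}$ and $P^{\rm majority}_{k^{\rm low}}$ (or their $k-1$ counterparts, depending on the type of worker $i$) weighted by the posterior belief from Proposition 1.

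For part (i), I would suppose all other workers play $(0,{\rm rd})$. Their reports become i.i.d.\ uniform on $\{-1,+1\}$ independent of $x$, so the majority of the other $N-1$ workers is uniform and $G_i=1/2$ regardless of worker $i$'s strategy. Hence playing $(0,{\rm rd})$ yields $R/2$, while $(1,\pm 1)$ yields $R/2-c\le R/2$, so no deviation is profitable for any $R\ge 0$.

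For part (ii), I would assume all other workers play $(1,1)$ and compute $P^{\rm maj}_{-i}$ as a posterior-weighted sum; since every other worker exerts effort with accuracy $>1/2$, we have $P^{\rm maj}_{-i}>1/2$. The payoff difference between $(1,1)$ and $(1,-1)$ is $R(2p_t-1)(2P^{\rm maj}_{-i}-1)$, which is strictly positive, so truthful reporting strictly dominates untruthful reporting conditional on effort. The binding deviation is therefore to $(0,{\rm rd})$, which requires $R[G_i^{(1,1)}-1/2]\ge c$. Applying the identity $G_i-1/2=\tfrac{1}{2}(2p_t-1)(2P^{\rm maj}_{-i}-1)$, this yields a type-dependent threshold; I would define $R_f^{\rm str}(\boldsymbol{\epsilon},k_p^{\rm anu})$ as the maximum over $t\in\{h,l\}$, which is strictly positive because $c>0$, and then check that both high- and low-accuracy workers' IC constraints are met at and above this threshold.

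For part (iii), I would write out the two relevant IC constraints under the $p$-SNE profile. When worker $i$ is high-accuracy, the other workers consist of $k-1$ high-accuracy effort-exerters and $N-k$ random reporters, giving majority correctness probability $P^{\rm majority}_{k^{\rm high}-1}$ or $P^{\rm majority}_{k^{\rm low}-1}$ depending on the realized $k$, weighted by the posterior. The IC constraint for high workers to prefer $(1,1)$ over $(0,{\rm rd})$ yields the lower threshold $R_{pl}^{\rm str}=c/[G_h-1/2]$. When worker $i$ is low-accuracy and deviates to $(1,1)$, the other workers consist of $k$ high workers and $N-k-1$ random reporters, so the relevant majority probability uses $P^{\rm majority}_{k^{\rm high}}$ and $P^{\rm majority}_{k^{\rm low}}$. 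The IC for low workers to stay at $(0,{\rm rd})$ gives the upper threshold $R_{ph}^{\rm str}=c/[G_l-1/2]$. As in part (ii), the truthful-versus-untruthful comparison is automatically satisfied at the prescribed effort levels. The hard step is showing $R_{pl}^{\rm str}\le R_{ph}^{\rm str}$ exactly when condition (10) holds: substituting the identity $G_t-1/2=\tfrac{1}{2}(2p_t-1)(2P^{\rm maj}-1)$ into $R_{pl}^{\rm str}\le R_{ph}^{\rm str}$ and rearranging leads precisely to the inequality between the $(2p_h-1)/(2p_l-1)$-scaled posterior-weighted majority probability on $(k-1)$ and the unscaled posterior-weighted majority probability on $k$ that defines $\Phi$. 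The main obstacle is handling this algebra carefully and confirming that the weights on the two possible realizations of $k$ in the low-worker deviation match those in condition (10), rather than shifting due to the change of conditioning on worker $i$'s type; I would be explicit that the strategic worker's posterior is formed only from $k_p^{\rm anu}$ and $\boldsymbol{\mu}^{\rm prior}$ via Proposition 1 and is therefore common across types.
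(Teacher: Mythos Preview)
Your one-shot deviation approach is correct and is precisely what the paper's surrounding discussion indicates (the paper itself omits the formal proof). The identity $G_i-\tfrac{1}{2}=\tfrac{1}{2}(2q_i-1)(2P^{\rm maj}_{-i}-1)$ is the right engine, and your treatment of parts (i) and (ii), as well as the ruling out of the $(1,-1)$ deviation, is sound.

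One point deserves care in part (iii). When you actually carry out the rearrangement of $R_{pl}^{\rm str}\le R_{ph}^{\rm str}$ using that identity, you will obtain
\[
(2p_h-1)\bigl(2A_h-1\bigr)\ \ge\ (2p_l-1)\bigl(2A_l-1\bigr),
\]
where $A_h,A_l$ denote your posterior-weighted majority-correctness probabilities on $k-1$ and $k$ respectively. This is exactly the statement ``the high-accuracy workers believe they have a larger probability of obtaining the reward than the low-accuracy workers'' that the paper gives as the \emph{meaning} of condition~(\ref{condition_psne}). However, it does \emph{not} coincide literally with the displayed inequality~(\ref{condition_psne}), which carries $A_h,A_l$ rather than $2A_h-1,2A_l-1$; the two are not equivalent (yours is stricter by the margin $p_h-p_l>0$). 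So your claim that the algebra ``leads precisely to'' the displayed~(\ref{condition_psne}) will not check out verbatim. The paper's prose and your derivation agree, so the discrepancy is almost certainly a typo or a shorthand in which the $P^{\rm majority}$ symbols already encode the $2P-1$ advantage; just be prepared to flag it rather than force a match.

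A minor remark: you justify $R_f^{\rm str}>0$ via ``$c>0$'', but the model allows $c\ge 0$. At $c=0$ the threshold degenerates to $0$ and the strict inequality in the theorem statement is itself borderline, so you may simply note that the interesting case is $c>0$.
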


	 Cases $(i)$ and $(ii)$ in Theorem \ref{Equilibrium_SNE} implies that given any  $\boldsymbol{\epsilon}$, an $n$-SNE and an $f$-SNE can exist under proper reward levels. In particular, they can coexist under a sufficiently large reward level (i.e., $R\ge R_f^{\rm str}\left(\boldsymbol{\epsilon}, k_p^{\rm anu}\right)$).
	 
	 Case $(iii)$ in Theorem \ref{Equilibrium_SNE} suggests a trickier analysis of the existence of $p$-SNE. Unlike $n$-SNE and $f$-SNE, for a $p$-SNE to exist, the information revelation strategy $\boldsymbol{\epsilon}$ must satisfy the condition specified in (\ref{condition_psne}). In (\ref{condition_psne}), $P^{\rm majority}_{k^{\rm high}-1}$ is the probability that the majority solution among $N-1$ workers is correct when $k^{\rm high}-1$ high-accuracy workers use $(1,1)$ and the remaining  workers use $(0, {\rm rd})$.  Similar discussions apply for $P^{\rm majority}_{k^{\rm low}-1}$, $P^{\rm majority}_{\rm high}$, and $P^{\rm majority}_{k^{\rm low}}$ in (\ref{condition_psne}). The expressions of these probability terms are complicated, yet we can apply the Z-transform algorithm in \cite{fernandez2010closed} to efficiently calculate them. Condition (\ref{condition_psne}) means that when exerting efforts, the high-accuracy workers believe they have a larger probability of obtaining the reward than the low-accuracy workers. If (\ref{condition_psne}) is violated, the high-accuracy workers will believe the chance of obtaining the reward is small. Together with a moderate amount of reward (i.e., $R_{pl}^{\rm str}\left(\boldsymbol{\epsilon}, k_p^{\rm anu}\right)\le R \le R_{ph}^{\rm str}\left(\boldsymbol{\epsilon}, k_p^{\rm anu}\right)$), the expected reward for a high-accuracy worker will be small. In this case,  the high-accuracy workers will not exert effort to save the cost, and  hence a $p$-SNE does not exist. 
	
Next, we characterize the impact of the platform's information revelation $\boldsymbol{\epsilon}$ on the worker equilibria in Corollary \ref{Reward_epsilonh}.
	
	\begin{Cor}\label{Reward_epsilonh}{}
		(i) Consider a fixed $\epsilon^l$. If $k_p^{\rm anu}=k^{\rm high}$,   $R_f^{\rm str}\left(\boldsymbol{\epsilon}, k_p^{\rm anu}\right)$ and $R_{pl}^{\rm str}\left(\boldsymbol{\epsilon}, k_p^{\rm anu}\right)$ in Theorem \ref{Equilibrium_SNE}  increase in $\epsilon^h$. Otherwise, if $k_p^{\rm anu}=k^{\rm low}$, both values decrease in $\epsilon^h$.\\
		(ii) Consider a fixed $\epsilon^h$. If $k_p^{\rm anu}=k^{\rm high}$,  $R_f^{\rm str}\left(\boldsymbol{\epsilon}, k_p^{\rm anu}\right)$ and $R_{pl}^{\rm str}\left(\boldsymbol{\epsilon}, k_p^{\rm anu}\right)$ in Theorem \ref{Equilibrium_SNE} also increase in $\epsilon^l$. Otherwise, if $k_p^{\rm anu}=k^{\rm low}$, both values decrease in $\epsilon^l$.
		
		\end{Cor}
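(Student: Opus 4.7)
The plan is to observe that both thresholds $R_f^{\rm str}$ and $R_{pl}^{\rm str}$ arise as $c$ divided by a posterior-expected marginal gain in reward probability from exerting effort, and therefore inherit their monotonicity in $\boldsymbol{\epsilon}$ entirely from the monotonicity of the posterior beliefs established in (\ref{belief_h_h})--(\ref{belief_l_l}).

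First, from the single-deviation indifference conditions used in the proof of Theorem \ref{Equilibrium_SNE}, each threshold admits the common structural form
\[
R^{\rm str}(\boldsymbol{\epsilon},k_p^{\rm anu}) = \frac{c}{\mu^{\rm post, str}_{\rm high}|_{k_p^{\rm anu}}(\boldsymbol{\epsilon})\, A_{\rm high} + \mu^{\rm post, str}_{\rm low}|_{k_p^{\rm anu}}(\boldsymbol{\epsilon})\, A_{\rm low}},
\]
where $A_{\rm high}, A_{\rm low}>0$ are constants depending only on the primitives $N, k^{\rm high}, k^{\rm low}, p_h, p_l$ and on which equilibrium is being analyzed (the binding deviator is a low-accuracy worker for $R_f^{\rm str}$ and a high-accuracy worker for $R_{pl}^{\rm str}$). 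Making the indifference equation explicit in each case recovers these formulas with $A_w$ proportional to $P^{\rm majority}_{\cdot}(k^w)-\tfrac{1}{2}$, using exactly the $P^{\rm majority}_{\cdot}$ quantities already appearing in (\ref{condition_psne}).

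Next, I would establish the auxiliary inequality $A_{\rm high}>A_{\rm low}>0$, which says the marginal gain in majority-match probability is strictly larger when the underlying worker population contains $k^{\rm high}$ rather than $k^{\rm low}$ high-accuracy workers. Intuitively, an extra high-accuracy truthful voter stochastically dominates a low-accuracy or random-reporting voter, so replacing one by the other strictly raises the probability that the majority among the remaining $N-1$ workers is correct. Combined with $p_l>\tfrac{1}{2}$ and $p_h>p_l$, this yields $A_{\rm high}>A_{\rm low}>0$, so the denominator of $R^{\rm str}$ is strictly increasing in $\mu^{\rm post, str}_{\rm high}|_{k_p^{\rm anu}}(\boldsymbol{\epsilon})$ (using $\mu^{\rm post, str}_{\rm low}|_{k_p^{\rm anu}}=1-\mu^{\rm post, str}_{\rm high}|_{k_p^{\rm anu}}$).

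Finally, by direct inspection of (\ref{belief_h_h})--(\ref{belief_l_l}) I would verify the four monotonicity facts on the posterior beliefs: $\mu^{\rm post, str}_{\rm high}|_{k^{\rm high}}(\boldsymbol{\epsilon})$ is strictly decreasing in both $\epsilon^h$ and $\epsilon^l$, while $\mu^{\rm post, str}_{\rm high}|_{k^{\rm low}}(\boldsymbol{\epsilon})$ is strictly increasing in both (e.g., rewrite $\mu^{\rm post, str}_{\rm high}|_{k^{\rm high}}=1/\bigl(1+\epsilon^h\mu^{\rm prior}_{\rm low}/((1-\epsilon^l)\mu^{\rm prior}_{\rm high})\bigr)$ to read off both monotonicities at once). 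Composing these four facts with the preceding structural step yields all four statements of the corollary. The main obstacle I anticipate is the coupling step $A_{\rm high}>A_{\rm low}$: while intuitive, a careful treatment is required because for $R_{pl}^{\rm str}$ one must compare majority-correctness probabilities with $k^{\rm high}-1$ versus $k^{\rm low}-1$ effort-exerting high-accuracy voters against a background of random-reporting low-accuracy voters, and the sign and ordering of $P^{\rm majority}_{\cdot}-\tfrac{1}{2}$ at small numbers of informative voters must be tracked carefully; this is handled cleanly via the Z-transform computation referenced after Theorem \ref{Equilibrium_SNE}.
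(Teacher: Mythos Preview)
Your proposal is correct and follows essentially the same route as the paper: the paper does not present a formal proof of this corollary but only the intuitive explanation immediately following its statement, namely that the thresholds depend on $\boldsymbol{\epsilon}$ solely through the posterior beliefs in (\ref{belief_h_h})--(\ref{belief_l_l}), and that a smaller posterior weight on $k^{\rm high}$ lowers the perceived chance of matching the majority and hence raises the required reward. Your three-step argument (structural form of the threshold, the ordering $A_{\rm high}>A_{\rm low}>0$, and the posterior monotonicities) is precisely the formalization of that intuition; in particular you correctly isolate $A_{\rm high}>A_{\rm low}$ as the one substantive inequality the paper leaves implicit.
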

	
	Case (i) in Corollary \ref{Reward_epsilonh} implies that if the platform is more likely to misreport $k_p^{\rm anu}=k^{\rm high}$ (i.e., a larger $\epsilon^h$), and it indeed announces $k^{\rm high}$, larger rewards are needed to induce both an $f$-SNE and a $p$-SNE. The intuition is that a larger $\epsilon^h$ makes the strategic workers believe that the average worker accuracy is lower (a smaller $\mu^{\rm post, str}_{\rm high}|_{k^{\rm high}}\left(\boldsymbol{\epsilon}\right)$ in (\ref{belief_h_h})), and hence there is a smaller chance of matching the majority solution and obtaining the reward. As a result, the platform needs to use larger rewards to incentivize the workers. Interestingly, Case (ii) says that $R_f^{\rm str}\left(\boldsymbol{\epsilon}, k_p^{\rm anu}\right)$ and $R_{pl}^{\rm str}\left(\boldsymbol{\epsilon}, k_p^{\rm anu}\right)$ also increase in $\epsilon^l$ when $k^{\rm anu}_p=k^{\rm high}$. A larger $\epsilon^l$ indicates that the platform is less likely to announce $k_{p}^{\rm anu}=k^{\rm high}$ when $k=k^{\rm high}$. Hence, when hearing $k_p^{\rm anu}=k^{\rm high}$, the strategic workers deduce that the real value is more likely to be $k=k^{\rm low}$. This implies a lower average worker accuracy and hence the platform needs larger rewards to incentivize them. 

	 Note that the different SNEs in Theorem \ref{Equilibrium_SNE} can coexist under proper conditions on $\boldsymbol{\epsilon}$ and $R$. For example, given any $\boldsymbol{\epsilon}$ and a sufficiently large $R$, at least an $n$-SNE and an $f$-SNE coexist. When multiple SNEs are possible, we are interested in understanding whether there exists a Pareto-dominant SNE, where each worker achieves a no smaller payoff,  with at least one worker achieving a strictly larger payoff, compared to that achieved in other possible SNEs \cite{kandori1993learning, chaoGC19,argenziano2016strategic}. We prove the existence of Pareto-dominant SNE in Theorem \ref{pareto}.
	\begin{The}\label{pareto}{(Pareto-dominant SNE in Stage III)}
	 Given any  $\boldsymbol{\epsilon}$ and $R$, there exists a Pareto-dominant SNE among the workers.
	\end{The}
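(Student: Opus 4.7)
The plan is to enumerate the at most three candidate SNE profiles identified by Theorem~\ref{Equilibrium_SNE} ($n$-SNE, $f$-SNE, $p$-SNE) and, for each $(\boldsymbol{\epsilon}, R)$, explicitly select the element of the existing set that weakly Pareto-dominates all the others. Because Theorem~\ref{Equilibrium_SNE}(i) guarantees that an $n$-SNE is always available, the equilibrium set is never empty, so the task reduces to establishing a Pareto ranking among any coexisting profiles.

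The first step is to express each worker's equilibrium payoff via \eqref{worker_payoff} in terms of the matching probability $G_i$ and the strategic posteriors of Proposition~\ref{stg_post}. A key structural observation is that a unilateral deviation to $(0,\mathrm{rd})$ produces a report that is uniform in $\{1,-1\}$ and independent of all other workers, so the associated deviation payoff equals $R\cdot G_i^{\mathrm{rd}}(\boldsymbol{s}_{-i};\boldsymbol{\epsilon})$ and depends only on how the remaining $N-1$ workers behave. The SNE no-deviation condition then pins each worker's equilibrium payoff exactly to $R\cdot G_i^{\mathrm{rd}}(\cdot)$ whenever the worker himself plays $(0,\mathrm{rd})$ (all workers in $n$-SNE; low-accuracy workers in $p$-SNE), and bounds it below by the same quantity otherwise.

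The second step is a monotone-coupling comparison between profiles. Letting $Y_j := \mathbf{1}\{x_j^{\mathrm{report}}=x\}$ be the correctness-indicator of worker $j$, switching a worker from $(0,\mathrm{rd})$ to $(1,1)$ moves $Y_j$ from $\mathrm{Bern}(\tfrac{1}{2})$ to $\mathrm{Bern}(p_j)$ with $p_j>\tfrac{1}{2}$, first-order stochastically dominating the former. Writing $G_i(\boldsymbol{s};\boldsymbol{\epsilon})=\mathbb{E}[\phi(Y_{-i})]$ for a coordinate-wise monotone $\phi:\{0,1\}^{N-1}\to[0,1]$ that encodes the majority rule together with the paper's tie-breaking convention, coordinate-wise stochastic dominance on $Y_{-i}$ transfers to a comparison of $G_i$'s. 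Combined with worker $i$'s own truthful signal (of accuracy $p_i>\tfrac{1}{2}$), taking expectations under the strategic posterior, and subtracting the effort cost $c$ only when the worker exerts effort in the relevant profile, this yields the pairwise comparisons $f\succeq p$, $p\succeq n$, and $f\succeq n$ on per-worker payoffs whenever the corresponding profiles coexist.

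The main obstacle is the explicit construction of the monotone $\phi$ so that the tie-breaking rule is absorbed without reversing the inequality, together with a careful check that the effort cost $c$ does not flip any of the pairwise comparisons at the boundary between SNE existence regions; the latter is resolved by invoking the equilibrium no-deviation inequality of step one precisely at those boundaries. Once this coupling lemma and its cost-adjusted payoff version are in place, the selection rule ``$f$-SNE if it exists, else $p$-SNE if it exists, else $n$-SNE'' produces the Pareto-dominant SNE for every $(\boldsymbol{\epsilon}, R)$, completing the proof.
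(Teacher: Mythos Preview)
The paper does not display a proof of Theorem~\ref{pareto} in the main text, so a line-by-line comparison is not possible; what can be said is that your proposed route has a concrete gap that would make the argument fail as written.

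Your central tool---a coordinate-wise monotone $\phi$ encoding ``match the majority, with ties resolved in worker $i$'s favour''---does not exist under the paper's tie-breaking convention. With $S=\sum_{j\neq i}Y_j$ and tie value $m=(N-1)/2$, the conditional reward probability for worker $i$ playing $(1,1)$ equals $1$ at $S=m$ (tie $\Rightarrow$ reward) but only $p_i<1$ at $S=m{+}1$; flipping one $Y_j$ from $0$ to $1$ therefore \emph{decreases} $\phi$. The same non-monotonicity breaks the comparison you actually need in step one: when worker $i$ plays $(0,\mathrm{rd})$ one has $G_i=\tfrac12+\tfrac12\Pr(\text{tie among the others})$, and the tie probability among effort-exerting peers can be strictly smaller than among randomizing peers, so $R\,G_i\big((0,\mathrm{rd}),(1,1)_{-i}\big)\ge u_i^{\,n}$ is false in general. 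Hence the equilibrium inequality $u_i^{\,f}\ge R\,G_i\big((0,\mathrm{rd}),(1,1)_{-i}\big)$ cannot be chained to $u_i^{\,f}\ge u_i^{\,n}$.

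A three-worker instance makes this explicit. Let $N=3$ and let the posterior put full mass on all workers being high-accuracy with $p=0.9$ (admissible via $k^{\rm high}=N$, $\epsilon^h=0$, and announcement $k^{\rm high}$). Direct computation gives $G_i^{\,f}=0.91$, $G_i\big((0,\mathrm{rd}),(1,1)_{-i}\big)=0.59$, and $G_i^{\,n}=0.75$, so the $f$-SNE threshold is $R_f=c/0.32$. At $R=R_f$ one gets $u_i^{\,f}=0.91R_f-c\approx 1.84c$ while $u_i^{\,n}=0.75R_f\approx 2.34c$: both SNEs coexist, yet the $n$-SNE strictly Pareto-dominates the $f$-SNE. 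Your selection rule ``$f$ if it exists, else $p$, else $n$'' therefore does not return the Pareto-dominant profile. A correct argument must allow \emph{any} of the three profiles to be the dominator depending on $(\boldsymbol{\epsilon},R)$, and must compare the payoffs directly (or treat the tie term separately) rather than relying on a stochastic-dominance coupling that the tie rule invalidates.
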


 We assume that when multiple SNEs coexist, the workers will choose the Pareto-dominant one \cite{chaoGC19,kong2016equilibrium,harsanyi1988general}. Hence,  Theorem \ref{pareto} enables us to solve the platform's reward design and information revelation by focusing on the Pareto-dominant SNE among the workers.
	
	\subsection{Platform Reward Design in Stage II}\label{stage2}
	In this subsection, we solve the platform's reward design problem. Given the decided $\boldsymbol{\epsilon}$ in Stage I, the platform observes $k$ and  decides $R$ in Stage II
	to maximize its payoff, anticipating the  Pareto-dominant SNE in Stage III. 
	

	Before characterizing the platform's optimal reward design, we provide several definitions for ease of exposition.
	\begin{Def}\label{def_efficiency}
		Let $z \in \left\{n, f, p\right\}$ denote the SNE index. Define:
	\begin{itemize}[leftmargin=5mm]
		\item $P_z(k)$: accuracy of the aggregated solution for $z$-SNE.
		\item $\mathbb{E}\left\{R^{ tot}_z\left(\boldsymbol{\epsilon},k, k_p^{\rm anu}\right)\right\}$: total expected consistency reward for $z$-SNE.
		\item $B_z\left(\boldsymbol{\epsilon},k, k_p^{\rm anu}\right)$: Bang-per-buck for $z$-SNE, where
  	\begin{equation}
		\begin{aligned}
		B_z\left(\boldsymbol{\epsilon},k, k_p^{\rm anu}\right)=\frac{P_z\left(k\right)-P_n\left(k\right)}{\mathbb{E}\left\{R^{ tot}_z\left(\boldsymbol{\epsilon}, k, k_p^{\rm anu}\right)\right\}} .
		\end{aligned}
		\end{equation}
	\end{itemize} 
	\end{Def}

	One can think of $B_z\left(\boldsymbol{\epsilon},k, k_p^{\rm anu}\right)$ as the average accuracy improvement from $n$-SNE to $z$-SNE per unit of reward. 
	As will be seen, the values of bang-per-buck for different SNEs will affect the platform's optimal reward design. 

		Note that the terms in Definition \ref{def_efficiency} are also functions of $k$. In Stage II, the platform observes the real value of $k$, and will utilize it to optimize the reward design. Next, we characterize the platform's optimal reward design in Theorem \ref{PlatformDecision2}. 
		\begin{The}{(Platform's Reward Design in Stage II)}\label{PlatformDecision2}\\
		(i) If (\ref{condition_psne}) holds and $B_p\left(\boldsymbol{\epsilon},k, k_p^{\rm anu}\right) \ge B_f\left(\boldsymbol{\epsilon},k, k_p^{\rm anu}\right) $, the platform's optimal reward level is
		\begin{equation}\label{Optimal_R_case1}
		\begin{aligned}
		&R^{*}\left(\boldsymbol{\epsilon},k, k_p^{\rm anu}\right)=\\
		&\begin{cases}
		0, \quad &\text{{\rm{if}}} \quad \beta < \frac{1}{B_p\left(\boldsymbol{\epsilon}, k, k_p^{\rm anu}\right)},\\
		R_{pl}^{\rm str}\left(\boldsymbol{\epsilon}, k_p^{\rm anu}\right), \quad &\text{{\rm{if}}} \quad \frac{1}{B_p\left(\boldsymbol{\epsilon}, k, k_p^{\rm anu}\right)} \le \beta < \tilde{\beta}\left(\boldsymbol{\epsilon}, k, k_p^{\rm anu}\right) ,\\
		R_f^{\rm str}\left(\boldsymbol{\epsilon},k_p^{\rm anu}\right), \quad &\text{{\rm{if}}} \quad  \beta \ge \tilde{\beta}\left(\boldsymbol{\epsilon}, k, k_p^{\rm anu}\right),
		\end{cases}
		\end{aligned}
		\end{equation}
		where 
		$\tilde{\beta}\left(\boldsymbol{\epsilon}, k, k_p^{\rm anu}\right)= \frac{\mathbb{E}\left\{R^{tot}_f\left(\boldsymbol{\epsilon}, k, k_p^{\rm anu}\right)\right\}-\mathbb{E}\left\{R^{tot}_p\left(\boldsymbol{\epsilon}, k, k_p^{\rm anu}\right)\right\}}{P_f\left( k\right)-P_p\left( k\right)}$.\\
		(ii) If either (\ref{condition_psne}) does not hold
		 or $B_p\left(\boldsymbol{\epsilon}, k, k_p^{\rm anu}\right) < B_f\left(\boldsymbol{\epsilon}, k,k_p^{\rm anu}\right) $, the platform's optimal reward level is
		\begin{equation}\label{Optimal_R_case2}
		R^{*}\left(\boldsymbol{\epsilon}, k,k_p^{\rm anu} \right)=
		\begin{cases}
		0, \quad &\text{{\rm{if}}} \quad \beta < \frac{1}{B_f\left(\boldsymbol{\epsilon}, k,k_p^{\rm anu}\right)},\\
		R_f^{\rm str}\left(\boldsymbol{\epsilon},k_p^{\rm anu}\right), \quad &\text{{\rm{if}}} \quad  \beta \ge \frac{1}{B_f\left(\boldsymbol{\epsilon}, k,k_p^{\rm anu}\right)}.
		\end{cases}
		\end{equation}
		\end{The}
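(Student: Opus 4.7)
The plan is to convert the continuous reward optimization over $R\ge 0$ into a discrete comparison among at most three candidate outcomes, indexed by which SNE is ultimately played, and then identify the winner as a function of $\beta$.

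First, by Theorem \ref{pareto} the workers coordinate on the Pareto-dominant SNE given any $(R,\boldsymbol{\epsilon})$, and by Theorem \ref{Equilibrium_SNE} the only SNEs that can arise are $n$-SNE, $p$-SNE (only when (\ref{condition_psne}) holds and $R\in[R_{pl}^{\rm str},R_{ph}^{\rm str}]$), and $f$-SNE (only when $R\ge R_f^{\rm str}$). Within any maximal interval of $R$ that induces a fixed Pareto-dominant SNE $z\in\{n,p,f\}$, the accuracy $P_z(k)$ is constant and $\mathbb{E}\{R^{tot}_z\}$ scales linearly with $R$ (it equals $R$ times the expected number of matching workers, which depends only on $z$), so the platform's payoff is strictly decreasing in $R$ on each such interval. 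Hence the optimal $R$ inside each interval is its left endpoint: $R=0$ for $n$-SNE, $R=R_{pl}^{\rm str}$ for $p$-SNE, and $R=R_f^{\rm str}$ for $f$-SNE. The platform's problem reduces to comparing three numbers, $U_n:=\beta P_n(k)$, $U_p:=\beta P_p(k)-\mathbb{E}\{R^{tot}_p\}$ (when $p$-SNE is feasible), and $U_f:=\beta P_f(k)-\mathbb{E}\{R^{tot}_f\}$.

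Next I would dispose of Case (ii). When (\ref{condition_psne}) fails, Theorem \ref{Equilibrium_SNE}(iii) rules out $p$-SNE and only the binary comparison of $U_n$ and $U_f$ remains; since both are affine in $\beta$ and $U_f$ has the larger slope $P_f(k)$, $U_f\ge U_n$ exactly when $\beta\ge \mathbb{E}\{R^{tot}_f\}/(P_f(k)-P_n(k))=1/B_f$, yielding (\ref{Optimal_R_case2}). When (\ref{condition_psne}) holds but $B_p<B_f$, the claim is that $p$-SNE is dominated at every $\beta$. A brief cross-multiplication shows $B_p<B_f$ is equivalent to $\tilde{\beta}<1/B_p$, so the interval $[1/B_p,\tilde{\beta}]$ on which $p$-SNE would beat both $n$- and $f$-SNE is empty, and the optimum again comes from the $n$-versus-$f$ comparison above, giving (\ref{Optimal_R_case2}).

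For Case (i), the same cross-multiplication yields the complementary equivalence $B_p\ge B_f \Longleftrightarrow 1/B_p\le \tilde{\beta}$. Since $U_p-U_n$ and $U_f-U_p$ are affine and strictly increasing in $\beta$ with roots $1/B_p$ and $\tilde{\beta}$ respectively, the pointwise maximum of the three affine functions switches from $U_n$ to $U_p$ at $\beta=1/B_p$ and from $U_p$ to $U_f$ at $\beta=\tilde{\beta}$, which is exactly the piecewise rule (\ref{Optimal_R_case1}). The main obstacle is the single algebraic identity that the sign of $B_p-B_f$ determines the sign of $\tilde{\beta}-1/B_p$; this is what makes the two hypotheses of the theorem partition cleanly into the stated piecewise formulas rather than overlap or leave a gap. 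A smaller but worth flagging subtlety is Pareto-selection at the boundary rewards $R=R_{pl}^{\rm str}$ and $R=R_f^{\rm str}$, where multiple SNEs can coexist: Theorem \ref{pareto} resolves this in favor of the higher-effort equilibrium, so the minimum-feasible reward really delivers the desired SNE rather than a worse one.
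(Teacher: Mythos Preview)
The paper does not include a formal proof of Theorem \ref{PlatformDecision2}; it states the result and follows it only with an intuitive paragraph explaining why the platform targets $p$-SNE or $f$-SNE depending on the bang-per-buck comparison and $\beta$. Your argument is therefore not competing against a written proof but filling one in, and it does so correctly along exactly the line the paper's discussion suggests.

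Your reduction is the standard one: by Theorems \ref{Equilibrium_SNE} and \ref{pareto}, the Pareto-dominant SNE is piecewise constant in $R$, the payoff is strictly decreasing in $R$ on each piece, so only the left endpoints $\{0, R_{pl}^{\rm str}, R_f^{\rm str}\}$ matter; then the three candidate payoffs $U_n,U_p,U_f$ are affine in $\beta$ with slopes $P_n<P_p<P_f$, and the upper envelope is governed by the crossing points $1/B_p$, $1/B_f$, and $\tilde{\beta}$. The key algebraic lemma you isolate---that $B_p\ge B_f$ is equivalent to $1/B_p\le\tilde{\beta}$ (and, symmetrically, to $1/B_f\le\tilde{\beta}$)---is exactly what makes the two cases in the theorem partition cleanly, and your cross-multiplication argument is correct provided $P_n<P_p<P_f$ and $\mathbb{E}\{R^{tot}_p\}<\mathbb{E}\{R^{tot}_f\}$, both of which are implicit monotonicity facts the paper relies on throughout.

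Two small points worth tightening: first, you should state explicitly that $P_f(k)>P_p(k)$ and $\mathbb{E}\{R^{tot}_f\}>\mathbb{E}\{R^{tot}_p\}$, since both the sign of $\tilde{\beta}$ and the direction of your cross-multiplication depend on them. Second, your appeal to Theorem \ref{pareto} at the boundary rewards is doing real work---the paper's discussion after Theorem \ref{PlatformDecision2} confirms that at $R=R_{pl}^{\rm str}$ the $p$-SNE (not $n$-SNE) is selected and at $R=R_f^{\rm str}$ the $f$-SNE is selected, but Theorem \ref{pareto} as stated only asserts existence of a Pareto-dominant SNE, not that the higher-effort one dominates; you are implicitly using that fact, so it would be cleaner to cite it as an additional lemma rather than fold it into Theorem \ref{pareto}.
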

	
	Note that (\ref{condition_psne}) is a necessary condition for a $p$-SNE to exist. A $p$-SNE exists if (\ref{condition_psne}) holds and the reward level is appropriate (see Theorem \ref{Equilibrium_SNE}). Hence, Theorem \ref{PlatformDecision2} implies that: (i)  If a $p$-SNE exists, it has a larger bang-per-buck than an $f$-SNE, and the platform's valuation $\beta$ is moderate, the platform will elicit a $p$-SNE as the Pareto-dominant SNE (in Stage III) via choosing $R^*=R_{pl}^{\rm str}\left(\boldsymbol{\epsilon}, k_p^{\rm anu}\right)$ to maximize its payoff. (ii) Otherwise, if either a $p$-SNE does not exist or it has a smaller bang-per-buck than an $f$-SNE, a $p$-SNE cannot be optimal for the platform. When  $\beta$ is large, the platform will elicit an $f$-SNE as the Pareto-dominant SNE via choosing $R^*=R_f^{\rm str}\left(\boldsymbol{\epsilon},k_p^{\rm anu}\right)$. 

	\subsection{Platform Information Revelation in Stage I}\label{stage1}
	In this subsection, we solve the platform's information revelation problem. The platform decides the information revelation strategy $\boldsymbol{\epsilon}=\left(\epsilon^h, \epsilon^l\right) \in [0,1]^2$ in Stage I to maximize its expected payoff, anticipating its own decision on $R$ in Stage II and the workers' Pareto-dominant SNE in Stage III. Note that the expectation is taken with respect to $k$, as the platform needs to jointly consider different cases for $k$ to decide $\boldsymbol{\epsilon}$.
	

	The information revelation problem in Stage I is challenging to solve, as it is complexly coupled with the reward design in Stage II (e.g., as shown in (\ref{condition_psne})), which leads to a non-convex program. More specifically, the information revelation in Stage I affects the workers' posterior belief, which together with the reward design in Stage II determines the workers' equilibrium behaviors in Stage III. In addition, the workers' decisions in Stage III are intertwined, since each worker obtains a reward if his solution matches the majority solution of the other workers. Nevertheless, we can exploit the special property of the problem to explore the solutions to Stage I.
	
		\begin{figure*}[t]
			\vspace{-5mm}
	\centering
	\subfloat[Platform payoff vs. $p_h$. ]{\includegraphics[width=2.36in]{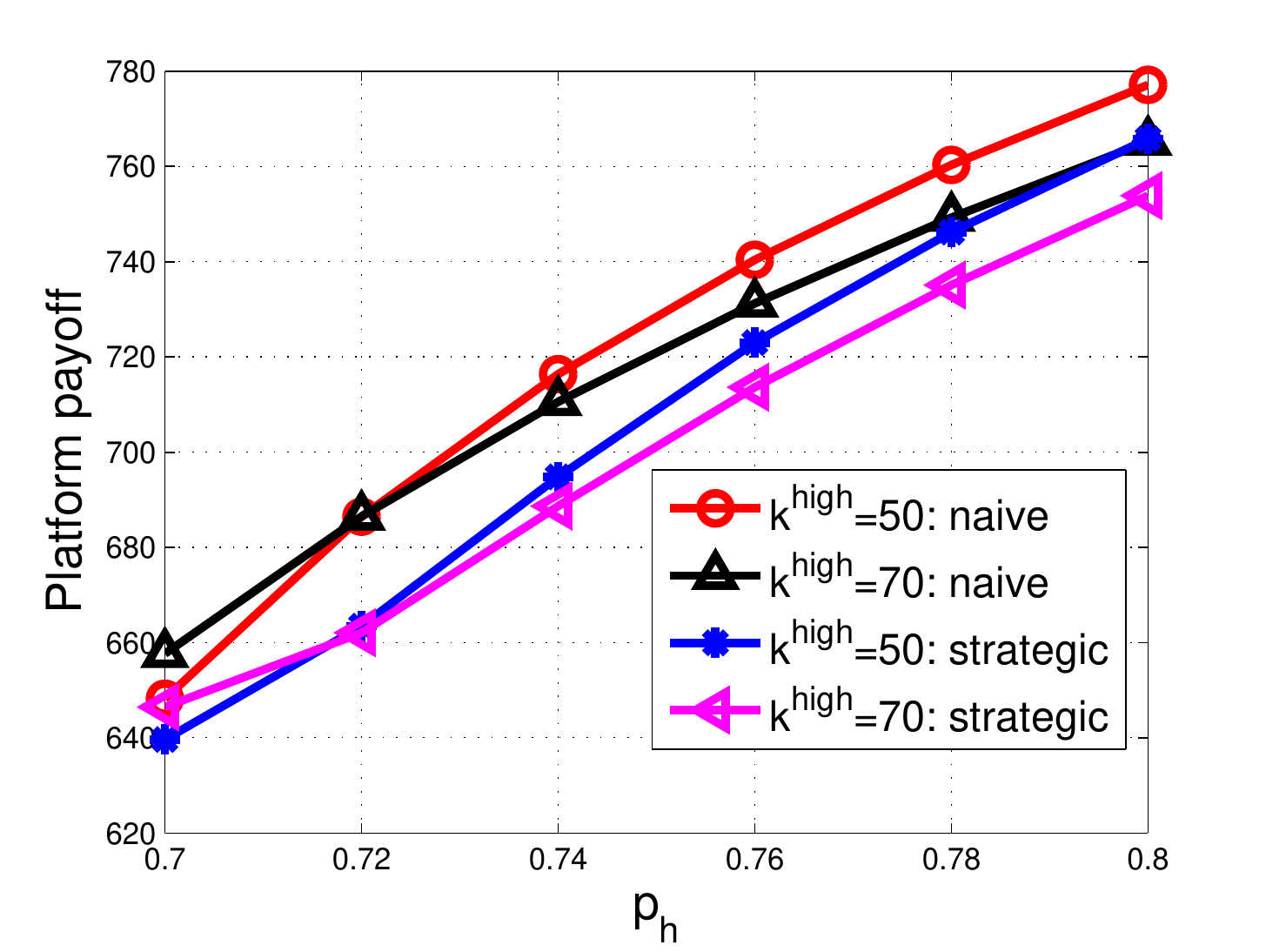}
		\label{payoffvsph}}
	\hfil
	\subfloat[Aggregate worker payoff vs. $p_h$. ]{\includegraphics[width=2.36in]{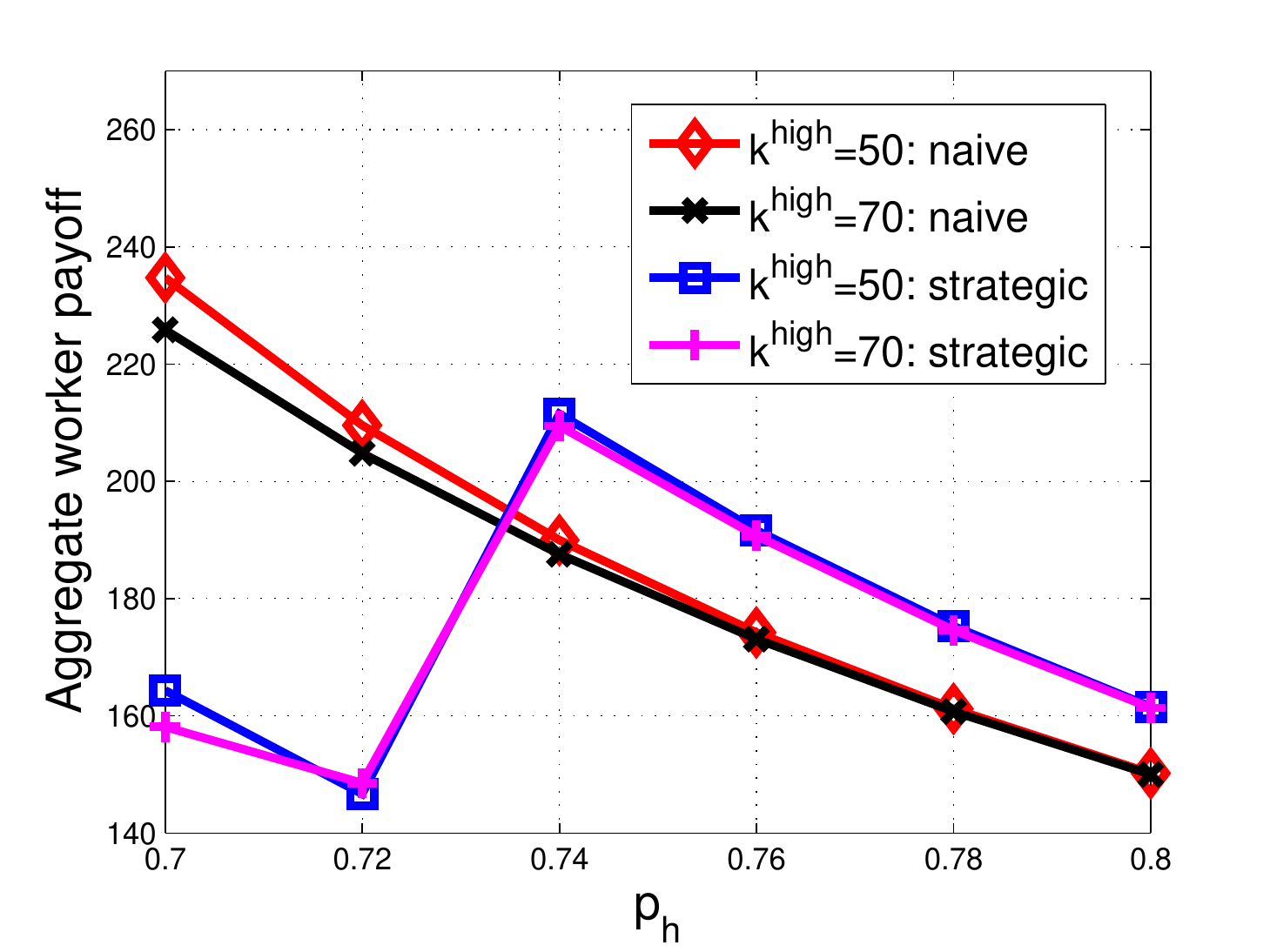}
		\label{workerpayoffvsph}}
		\hfil
		\subfloat[Social welfare vs. $p_h$. ]{\includegraphics[width=2.36in]{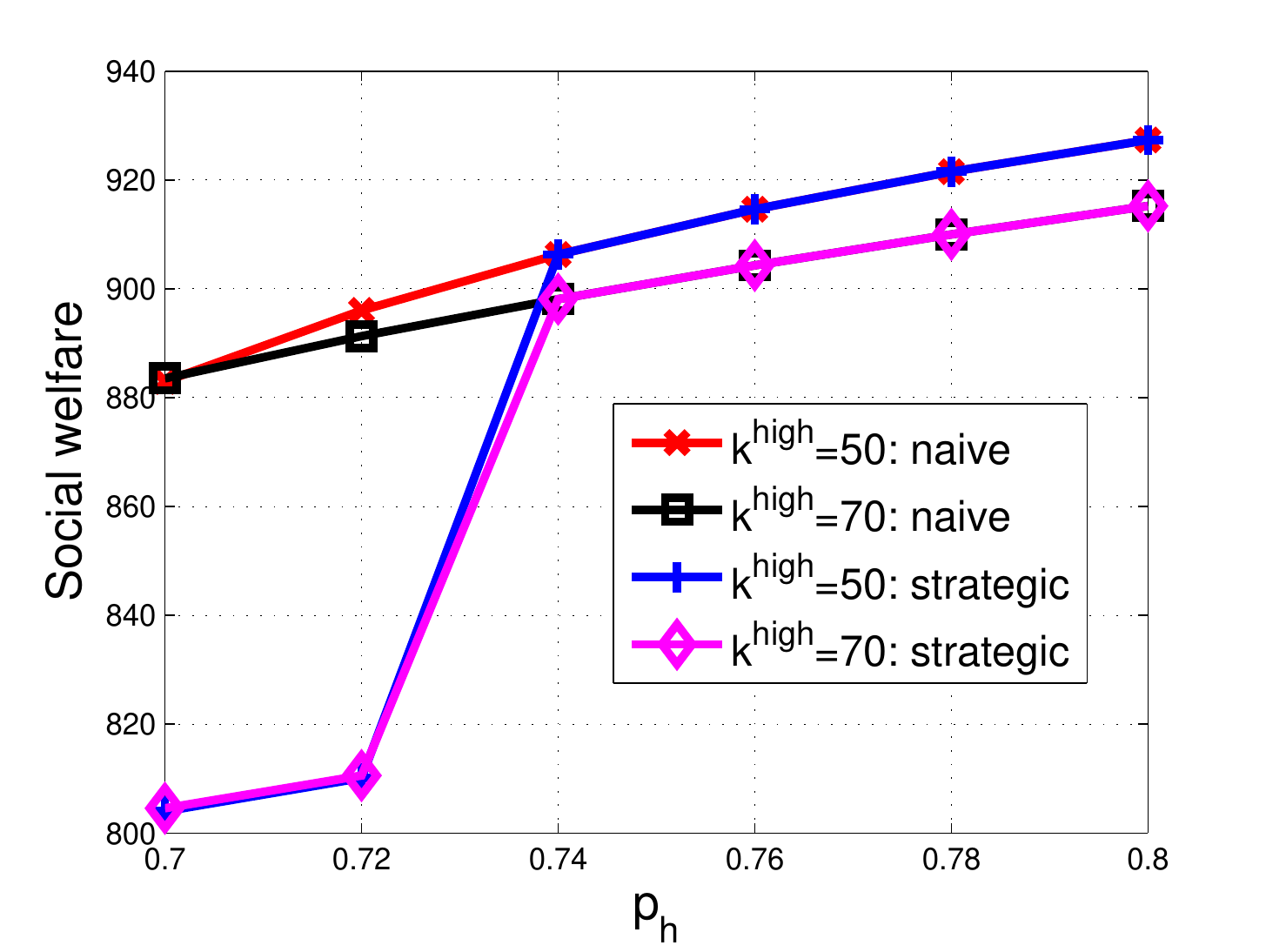}
			\label{Socialwelfarevsph}}
	\caption{Impact of worker characteristics $p_h$ on the overall mechanism performance.}
	\label{payoffandeta}
	\vspace{-5mm}
\end{figure*}
	
	We characterize a key property of the platform's optimal information revelation strategy in Theorem  \ref{info_naive}.
	\begin{The}\label{info_naive}{(Platform's Information Revelation in Stage I)}
	 It is not always optimal for the platform to set $\epsilon^h=1$ and $\epsilon^l=0$.
	\end{The}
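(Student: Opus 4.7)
The approach is proof by counterexample: I would exhibit problem parameters under which some alternative revelation strategy strictly dominates $(\epsilon^h,\epsilon^l)=(1,0)$, so that $(1,0)$ cannot be optimal in general.

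The key structural observation is that setting $\epsilon^h=1$ and $\epsilon^l=0$ forces $k_p^{\rm anu}=k^{\rm high}$ with probability one, regardless of the realized $k$. Substituting into (\ref{belief_h_h}) gives
\begin{equation*}
\mu^{\rm post, str}_{\rm high}|_{k^{\rm high}}\left(\boldsymbol{\epsilon}\right)=\frac{\mu^{\rm prior}_{\rm high}}{\mu^{\rm prior}_{\rm high}+\mu^{\rm prior}_{\rm low}}=\mu^{\rm prior}_{\rm high},
\end{equation*}
so the announcement is entirely uninformative to strategic workers---the posterior coincides with the prior, and the platform has effectively discarded its informational advantage. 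Consequently, the Stage III reward thresholds $R_f^{\rm str}$ and $R_{pl}^{\rm str}$ in Theorem \ref{Equilibrium_SNE}, and hence the optimal reward $R^{*}$ from Theorem \ref{PlatformDecision2}, depend only on $\boldsymbol{\mu}^{\rm prior}$ and not on the realized $k$.

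I would then compare $(1,0)$ against the fully truthful strategy $(\epsilon^h,\epsilon^l)=(0,0)$, under which $k_p^{\rm anu}=k$ deterministically and strategic workers' posterior places mass one on the true $k$. By Corollary \ref{Reward_epsilonh}(i), when $k_p^{\rm anu}=k^{\rm high}$ the threshold $R_f^{\rm str}$ is strictly smaller under $(0,0)$ than under $(1,0)$, because decreasing $\epsilon^h$ at fixed $\epsilon^l=0$ strictly reduces $R_f^{\rm str}$. Hence, conditional on $k=k^{\rm high}$, the platform elicits an $f$-SNE at strictly lower cost under the truthful strategy and obtains a strictly larger Stage II payoff. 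The remaining task is to choose a parameter regime in which this gain, weighted by $\mu^{\rm prior}_{\rm high}$, is not offset by any loss in the $k=k^{\rm low}$ realization. A natural choice is to take $\beta$ large enough that an $f$-SNE is elicited for every realized $k$ under both strategies, $\mu^{\rm prior}_{\rm high}$ bounded away from $0$ and $1$, and a sizeable gap $k^{\rm high}-k^{\rm low}$ so that the sensitivity of $R_f^{\rm str}$ to $\boldsymbol{\epsilon}$ is substantial.

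The principal obstacle is the combinatorial intricacy of the majority-voting probabilities $P_z(k)$ and the closed-form expressions for $R_f^{\rm str}\left(\boldsymbol{\epsilon},k_p^{\rm anu}\right)$, which preclude a purely symbolic comparison in full generality. I would sidestep this either by (a) instantiating a small explicit example (e.g., $N=3$ with $k^{\rm high}=3$, $k^{\rm low}=1$, together with concrete values of $p_h, p_l, c, \beta, \mu^{\rm prior}_{\rm high}$) and computing the expected payoff under the two strategies directly via Theorem \ref{PlatformDecision2}, or (b) a limiting argument (for instance $p_h\to 1$) that renders the majority-voting probabilities essentially degenerate and makes the inequality transparent. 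Either route yields parameters for which $(\epsilon^h,\epsilon^l)=(1,0)$ is strictly suboptimal, establishing the theorem.
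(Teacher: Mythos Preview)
Your approach is correct and, in some respects, sharper than the paper's own argument. Both routes are ultimately counterexample-based, but the decompositions differ. The paper expands the expected platform payoff into the four cases $(h,h),(h,l),(l,h),(l,l)$ and invokes Lemma~\ref{monotone} to show that these components move in opposite directions as $\epsilon^h$ varies; it then concludes non-monotonicity of the expected payoff and confirms this numerically (Fig.~\ref{epsilonhl}). Your argument instead isolates the structural fact that $(\epsilon^h,\epsilon^l)=(1,0)$ collapses the posterior to the prior---the announcement carries zero information to strategic workers---and compares it head-to-head against full truthfulness $(0,0)$, using Corollary~\ref{Reward_epsilonh} to sign the change in $R_f^{\rm str}$ on the $k=k^{\rm high}$ branch.

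What your route buys is directness: to show that $(1,0)$ is not always optimal you need only exhibit a single alternative that beats it once, and $(0,0)$ is the natural candidate because it perfectly informs workers and thereby minimizes $R_f^{\rm str}$ on the high branch. What the paper's route buys is a more panoramic picture of the trade-off: Lemma~\ref{monotone} explains \emph{why} the objective is non-monotone in $\epsilon^h$ (and, by symmetry, in $\epsilon^l$), and this decomposition is what later motivates the observation that the platform may even prefer $\epsilon^{l*}>0$. Your approach leaves a small obligation---verifying that parameters exist where the gain on the $k=k^{\rm high}$ branch strictly dominates the loss on the $k=k^{\rm low}$ branch---but your two proposed resolutions (an explicit small-$N$ instance or a degenerate limit such as $p_h\to 1$) are both workable and match in spirit the paper's reliance on numerical instantiation.
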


Theorem \ref{info_naive} implies that the platform may not find it optimal to always announce $k_p^{\rm anu}=k^{\rm high}$ (see (\ref{epsilonh})) to strategic workers. This is counter-intuitive, as one may think that the platform should lull the workers into believing in a high overall worker capability (by always announcing $k^{\rm high}$) to maximize its payoff.

To understand this surprising result, we elaborate the rationale behind Theorem \ref{info_naive} as follows. Recall that the true value of $k$ affects the reward design in Stage II, and the platform's announced value $k_p^{\rm anu}$ affects the worker behaviors in Stage III. Therefore, to analyze the information revelation in Stage I, we consider the following four cases:
	\begin{itemize}
		\item \textit{Case (h,h)}: $k=k^{\rm high}$ and $k_p^{\rm anu}=k^{\rm high}$, which happens with a probability $Q_{h,h}\left(\boldsymbol{\epsilon}\right)=\mu^{\rm prior}_{\rm high}\cdot \left(1-\epsilon^l\right)$.
		\item \textit{Case (h,l)}: $k=k^{\rm high}$ but $k_p^{\rm anu}=k^{\rm low}$, which happens with a probability $Q_{h,l}\left(\boldsymbol{\epsilon}\right)=\mu^{\rm prior}_{\rm high}\cdot \epsilon^l$.
		\item \textit{Case (l,h)}: $k=k^{\rm low}$ but $k_p^{\rm anu}=k^{\rm high}$, which happens with a probability $Q_{l,h}\left(\boldsymbol{\epsilon}\right)=\mu^{\rm prior}_{\rm low}\cdot \epsilon^h$.
		\item \textit{Case (l,l)}: $k=k^{\rm low}$ and $k_p^{\rm anu}=k^{\rm low}$, which happens with a probability $Q_{l,l}\left(\boldsymbol{\epsilon}\right)=\mu^{\rm prior}_{\rm low}\cdot \left(1-\epsilon^h\right)$.
	\end{itemize}

 To solve $\boldsymbol{\epsilon}^*=\left(\epsilon^{h*}, \epsilon^{l*}\right)$ in Stage I, we first consider a fixed $\epsilon^l$, and write the expected platform payoff as follows:
\begin{equation}\label{payoff_stage3}
\begin{aligned}
\mathbb{E}\left\{U_p\left(\epsilon^h\right)\right\}&=Q_{h,h}\cdot U_{h,h}\left(\epsilon^h\right)+Q_{h,l}\cdot U_{h,l}\left(\epsilon^h\right)\\
&\hspace{-10mm}+Q_{l,h}\left(\epsilon^h\right)\cdot U_{l,h}\left(\epsilon^h\right)+Q_{l,l}\left(\epsilon^h\right)\cdot U_{l,l}\left(\epsilon^h\right),
\end{aligned}
\end{equation}
where $U_{h,h}\left(\epsilon^h\right)$ represents the maximum platform payoff under \textit{Case (h,h)} after the platform optimizes the reward in Stage II. The other notations $U_{h,l}\left(\epsilon^h\right), U_{l,h}\left(\epsilon^h\right)$, and $U_{l,l}\left(\epsilon^h\right)$ are similarly defined. Notice that  $U_{h,h}\left(\epsilon^h\right)$ depends on $\epsilon^h$, as it affects the strategic workers' posterior belief, and hence the reward design and the platform payoff.

Now, we will show that the expected platform payoff $\mathbb{E}\left\{U_p\left(\epsilon^h\right)\right\}$ may not be monotone in $\epsilon^h$. To see this, we first summarize some monotonicity results in Lemma \ref{monotone}.
\begin{Lem}\label{monotone} 
	(i) The terms $U_{h,h}\left(\epsilon^h\right)$, $U_{l,h}\left(\epsilon^h\right)$, and $Q_{l,l}\left(\epsilon^h\right)$  in (\ref{payoff_stage3}) decrease in $\epsilon^h$.\\
	(ii) The terms $U_{h,l}\left(\epsilon^h\right)$, $U_{l,l}\left(\epsilon^h\right)$, and $Q_{l,h}\left(\epsilon^h\right)$ in (\ref{payoff_stage3}) increase in $\epsilon^h$.
\end{Lem}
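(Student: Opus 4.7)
The plan is to treat the $Q$ and $U$ monotonicities separately, since the former are immediate and the latter require combining Theorem \ref{PlatformDecision2} with Corollary \ref{Reward_epsilonh}. For the $Q$ terms, the closed forms $Q_{l,l}(\epsilon^h)=\mu^{\rm prior}_{\rm low}(1-\epsilon^h)$ and $Q_{l,h}(\epsilon^h)=\mu^{\rm prior}_{\rm low}\epsilon^h$ give the claimed monotonicity in $\epsilon^h$ at once, so no further work is needed.

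For the $U$ terms, I would invoke Theorem \ref{PlatformDecision2} to write each $U_{\bullet,\bullet}(\epsilon^h)$ as the pointwise maximum of at most three candidate payoffs corresponding to the reward choices $R^{*}=0$, $R^{*}=R_{pl}^{\rm str}(\boldsymbol{\epsilon},k_p^{\rm anu})$, and $R^{*}=R_f^{\rm str}(\boldsymbol{\epsilon},k_p^{\rm anu})$. The $R^{*}=0$ candidate equals $\beta P_n(k)$ and is independent of $\epsilon^h$. For each nonzero candidate I would observe that the expected total reward factors as $R^{*}\cdot \Pi_z(k)$ with $\Pi_z(k)>0$ independent of $\epsilon^h$: once the equilibrium strategy profile $z$ is played, each worker is paid $R^{*}$ exactly when his solution matches the majority, and the probability of that event depends only on the physical worker types ($k$) and the realized profile ($z$), not on workers' posterior beliefs. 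Each nonzero candidate therefore reduces to $\beta P_z(k)-R^{*}\cdot \Pi_z(k)$, which is affine and strictly decreasing in $R^{*}$.

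The conclusion then follows directly from Corollary \ref{Reward_epsilonh}. In Cases (h,h) and (l,h), where $k_p^{\rm anu}=k^{\rm high}$, both thresholds $R_{pl}^{\rm str}$ and $R_f^{\rm str}$ increase in $\epsilon^h$, so each nonzero candidate is weakly decreasing in $\epsilon^h$; combined with the flat $R^{*}=0$ candidate, the pointwise maximum is weakly decreasing, giving monotonicity of $U_{h,h}$ and $U_{l,h}$. Symmetrically, in Cases (h,l) and (l,l), where $k_p^{\rm anu}=k^{\rm low}$, both thresholds decrease in $\epsilon^h$, each nonzero candidate is weakly increasing, and hence $U_{h,l}$ and $U_{l,l}$ are weakly increasing in $\epsilon^h$.

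The main obstacle I foresee is bookkeeping around the piecewise structure in Theorem \ref{PlatformDecision2}: which candidates are actually active depends on whether (\ref{condition_psne}) holds and on the relative bang-per-buck ordering $B_p$ vs.\ $B_f$, both of which can vary with $\epsilon^h$. I would sidestep this by enveloping all feasible candidates into a single maximum expression at every $\epsilon^h$, since monotonicity of each individual member is preserved regardless of which subset is active. Verifying continuity of $P_z(k)$, $\Pi_z(k)$, and the thresholds $R_{pl}^{\rm str}, R_f^{\rm str}$ in $\boldsymbol{\epsilon}$ is the only remaining technical point, and it ensures that monotonicity survives any branch switch in the piecewise structure.
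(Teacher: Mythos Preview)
Your proposal is correct and follows essentially the same route as the paper. The paper's own argument is quite informal: it notes that the $Q$ monotonicities are immediate from the definitions, and for the $U$ terms simply invokes Corollary~\ref{Reward_epsilonh} (``under a larger $\epsilon^h$, the platform needs to use larger rewards to incentivize the workers, which can decrease the platform payoff''). Your version fleshes this out by making explicit the decomposition of each $U_{\bullet,\bullet}$ as a pointwise maximum of candidates that are affine and decreasing in the relevant reward threshold, and by observing that $\Pi_z(k)$ is belief-independent because the platform knows $k$ in Stage~II; this added structure is exactly what is needed to turn the paper's one-line intuition into a rigorous monotonicity statement.
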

One can easily verify the monotonicity of $Q_{l,l}\left(\epsilon^h\right)$ and $Q_{l,h}\left(\epsilon^h\right)$ in $\epsilon^h$ based on the definitions. We focus on the monotonicity of the platform payoff under different cases. Consider $U_{h,h} \left(\epsilon^h\right)$ under \textit{Case (h,h)}, which decreases in $\epsilon^h$, as an example. 
As shown in Corollary \ref{Reward_epsilonh}, under a larger $\epsilon^h$, the platform needs to use larger rewards to incentivize the workers, which can decrease the platform payoff. 
Similar discussions apply for $U_{l,h}\left(\epsilon^h\right)$, $U_{h,l}\left(\epsilon^h\right)$, and  $U_{l,l}\left(\epsilon^h\right)$. Because different components of $\mathbb{E}\left\{U_p\left(\epsilon^h\right)\right\}$ in (\ref{payoff_stage3}) have different monotonic properties, the expected platform payoff is not always monotonic in $\epsilon^h$ (we have conducted numerical experiments to validate the non-monotonicity). One can also show that given $\epsilon^h$, the expected payoff is not always monotonic in $\epsilon^l$. In other words, the platform even has incentives to announce $k^{\rm low}$ when $k=k^{\rm high}$ (i.e., using a positive $\epsilon^{l*}$). To conclude, the platform needs to achieve a tradeoff between being honest and lying.

The closed-form solutions of $\epsilon^{h*}$ and $\epsilon^{l*}$ are hard to derive, mainly due to the complex
coupling between the reward design and the information revelation
via (\ref{condition_psne}), (\ref{Optimal_R_case1}), (\ref{Optimal_R_case2}) (see Theorem \ref{PlatformDecision2}).  Nevertheless, besides the above analysis, we will  construct numerical examples in Section \ref{numerical} to validate Theorem \ref{info_naive}.

	\vspace{-1mm}
	\section{Numerical Results}\label{numerical}
	In this section,  we provide numerical results to investigate the impact of the workers' characteristics and their prior belief on the overall mechanism performance. 	
	
	We study two types of workers: (i) \textit{strategic} workers who may doubt the platform's announced information (with complete analytical results shown in Section IV); (ii) \textit{naive} workers who put full trust in whatever the platform announces. We use notations similar to (\ref{belief_h_h})-(\ref{belief_l_l}) (except that the superscript is changed from ${\rm post, str}$ to ${\rm post, nai}$) to denote the naive workers' posterior belief. Specifically, we have 
	\begin{equation}
	\begin{aligned}
	\begin{cases}
	&\mu^{\rm post, nai}_{\rm high}|_{k^{\rm high}}=\mu^{\rm post, nai}_{\rm low}|_{k^{\rm low}}=1,\\
	&\mu^{\rm post, nai}_{\rm high}|_{k^{\rm low}}=\mu^{\rm post, nai}_{\rm low}|_{k^{\rm high}}=0,
	\end{cases}
	\end{aligned}
	\end{equation}	
	where $\mu^{\rm post, nai}_{\rm high}|_{k^{\rm low}}$ represents a naive worker's posterior belief in $k=k^{\rm high}$ conditional on the platform's announcement $k_{p}^{\rm anu}=k^{\rm low}$. A naive worker will discard his prior belief and fully believe in the announced value, e.g., $\mu^{\rm post, nai}_{\rm high}|_{k^{\rm high}}=1$. The consideration of naive workers will serve as a benchmark comparison to strategic workers.
	
	As will be shown, 
	the platform always finds it optimal to announce a high average worker accuracy to naive workers, but this is not the case to strategic workers. 
	We also show a  counter-intuitive result that the platform's payoff increases in the \emph{solution accuracy} of the high-accuracy workers but may decrease in the \emph{number} of these high-accuracy workers.

	\subsection{Impact of Worker Characteristics}
	In this subsection, we study how the optimal platform payoff, the aggregate worker payoff (defined as the summation of all the workers' payoffs), and the social welfare (defined as the summation of the platform's and all the workers' payoffs) depend on the high-accuracy workers' solution accuracy $p_h$. In the experiments, we set $N=100$, $p_l=0.6$, $\mu^{\rm prior}_{\rm high}=0.7$, $\mu^{\rm prior}_{\rm low}=0.3$, $k^{\rm low}=20$,  $c=1$, $\beta=1000$, choose $k^{\rm high}$ from the set $\left\{50,70\right\}$, and change $p_h$ from $0.7$ to $0.8$ with a step size $0.02$. Fig. \ref{payoffvsph}, Fig. \ref{workerpayoffvsph}, Fig. \ref{Socialwelfarevsph} illustrate how the platform's optimal payoff, the aggregate worker payoff, and the social welfare change with $p_h$ under different $k^{\rm high}$, respectively.

		In Fig. \ref{payoffvsph}, we observe that the platform payoff increases in $p_h$. As the solution accuracy of the high-accuracy workers improves, the platform can generate the aggregated solution with  a higher accuracy and use smaller rewards to incentivize the workers. This leads to a higher platform payoff. However, we observe that given $p_h$ (e.g., $p_h=0.76$), the platform payoff may decrease in $k^{\rm high}$ (e.g., strategic workers). This is because a larger number of high-accuracy workers brings a marginally decreasing benefit to the platform, yet the total rewards may grow drastically. Note that the above observations are robust, as they hold in both the strategic and naive worker cases. One may think this conclusion resembles the one in \cite{chaoGC19}, yet they are derived under significantly different game environments. Different from \cite{chaoGC19}, we derive the results further accounting for both the information asymmetry and the strategic revelation between the platform and the workers. Next, we summarize the above observations as follows:

	\begin{figure*}[t]
		\vspace{-5mm}
		\centering
		\subfloat[Platform payoff vs. $\mu_{\rm high}^{\rm prior}$. ]{\includegraphics[width=2.36in]{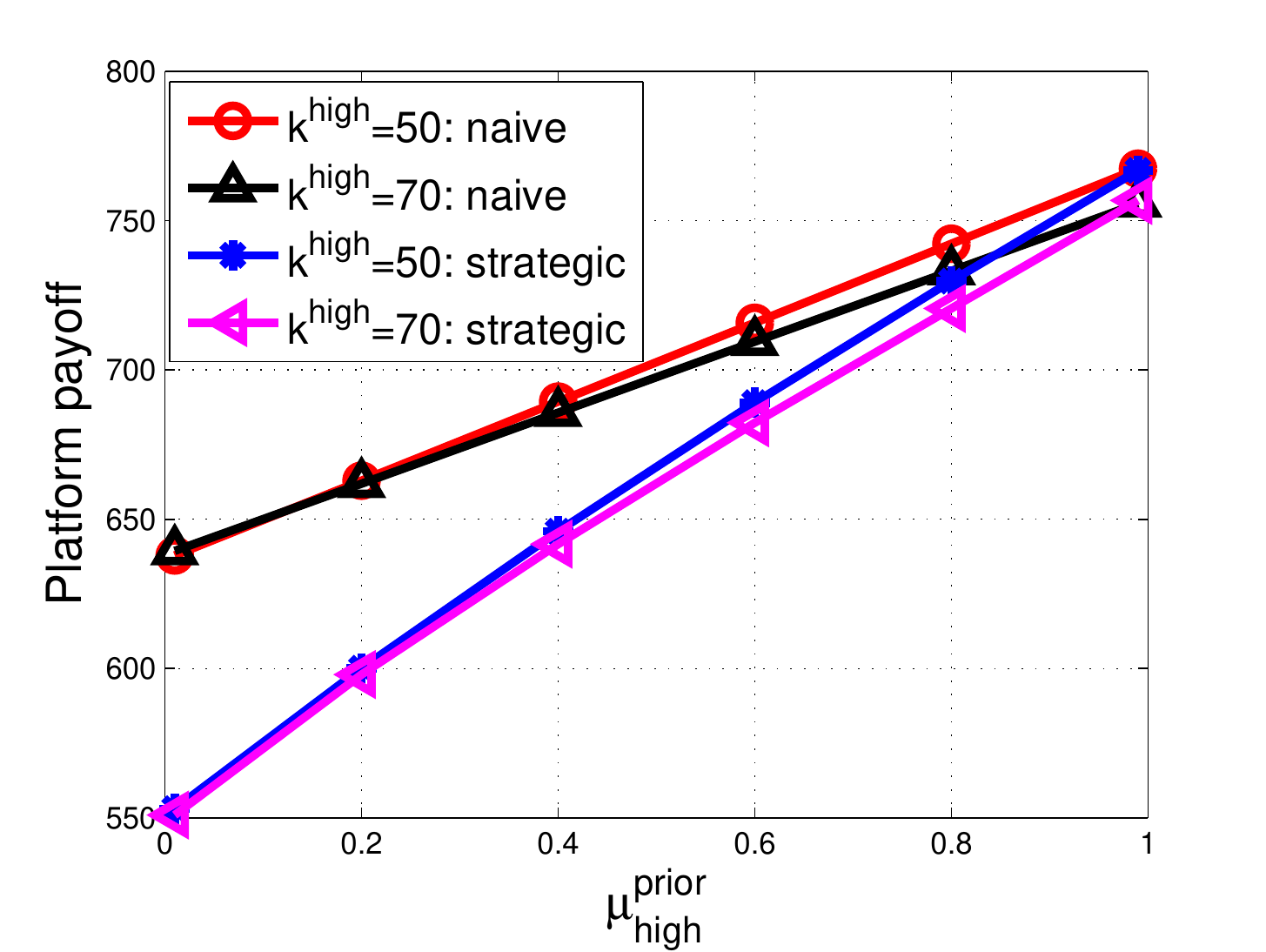}
			\label{payoffvsmu0}}
		\hfil
		\subfloat[Aggregate worker payoff vs. $\mu_{\rm high}^{\rm prior}$. ]{\includegraphics[width=2.36in]{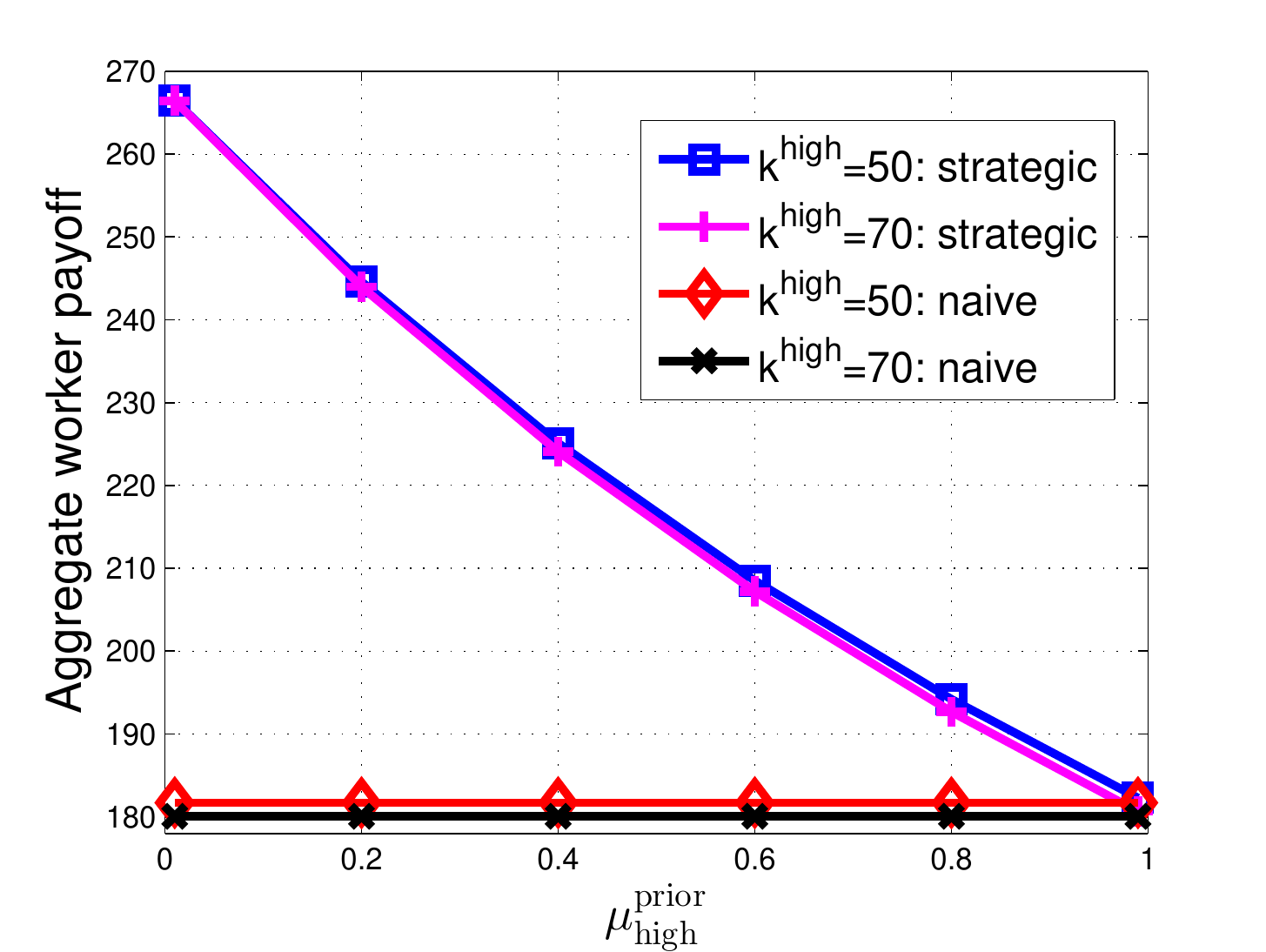}
			\label{workerpayoffvsmu0}}
		\hfil
		\subfloat[Platform information revelation vs. $\mu_{\rm high}^{\rm prior}$. ]{\includegraphics[width=2.36in]{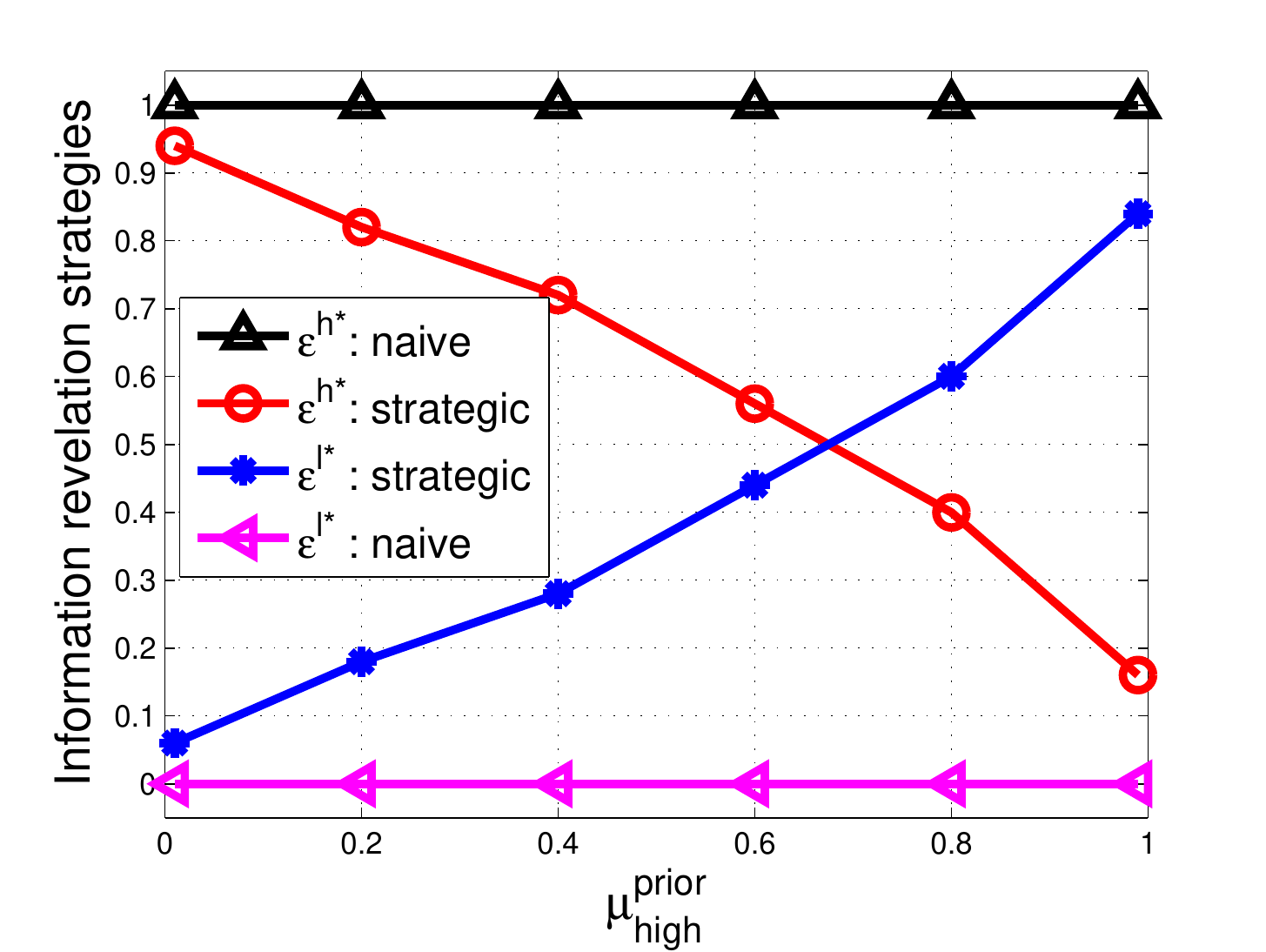}
			\label{epsilonhl}}
		
		\caption{Impact of worker prior belief $\mu_{\rm high}^{\rm prior}$ on the overall mechanism performance.}
		\label{priorbelief}
        \vspace{-3mm}
	\end{figure*}
	
	\begin{Obs}
		The platform's optimal payoff increases in the high-accuracy workers' solution accuracy $p_h$,  but it may decrease in the number of the high-accuracy workers $k^{\rm high}$.
	\end{Obs}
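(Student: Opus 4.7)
The plan is to handle the two halves of the observation separately, since the first is a clean monotonicity claim while the second is an existence claim about non-monotonic behavior that is best established by exhibiting a scenario where it occurs.

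For the monotonicity in $p_h$, I would first fix $\boldsymbol{\epsilon}$, $k$, and $k_p^{\rm anu}$ and argue, at the level of the worker subgame, that increasing $p_h$ can only help the platform. Two things happen as $p_h$ grows: (i) the aggregated accuracy $P_z(k)$ produced under any SNE $z\in\{f,p\}$ strictly increases, because the majority-voting probability is a sum of products of the individual $p_i$'s that is termwise monotone in each $p_i$ above $0.5$; and (ii) the reward thresholds $R_f^{\rm str}(\boldsymbol{\epsilon},k_p^{\rm anu})$ and $R_{pl}^{\rm str}(\boldsymbol{\epsilon},k_p^{\rm anu})$ of Theorem~\ref{Equilibrium_SNE} are non-increasing in $p_h$, since a high-accuracy worker's conditional probability of matching the majority improves with $p_h$, so less reward suffices to make effort incentive-compatible. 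Combining these two facts in the piecewise-constant reward prescriptions (\ref{Optimal_R_case1})--(\ref{Optimal_R_case2}) shows that every branch of the optimal payoff in Stage~II is non-decreasing in $p_h$, and the maximum over branches preserves this monotonicity. Finally, taking expectation over $k$ with the $Q_{\cdot,\cdot}(\boldsymbol{\epsilon})$ weights (which do not depend on $p_h$) preserves monotonicity in $p_h$ of the Stage~I objective, hence of the optimized platform payoff $\mathbb{E}\{U_p\}$.

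For the ``may decrease in $k^{\rm high}$'' half, I would not attempt a blanket analytical claim but rather construct an explicit instance. Using the parameter family of the accompanying numerical study ($N=100$, $p_l=0.6$, $\mu^{\rm prior}_{\rm high}=0.7$, $k^{\rm low}=20$, $c=1$, $\beta=1000$, with $p_h$ in a narrow range), I would compute $P_z(k)$ and $\mathbb{E}\{R^{tot}_z\}$ via the Z-transform evaluation referenced after Theorem~\ref{Equilibrium_SNE}, plug these into $B_z$ and into the optimal reward (\ref{Optimal_R_case1})--(\ref{Optimal_R_case2}), optimize over $\boldsymbol{\epsilon}\in[0,1]^2$ via a grid, and compare the resulting $\mathbb{E}\{U_p\}$ for $k^{\rm high}=50$ and $k^{\rm high}=70$ at a fixed $p_h$ (e.g.\ $p_h=0.76$ as in the figure). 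The key mechanism to isolate is that $P_f(k)$ and $P_p(k)$ saturate quickly once a strict majority of workers are above $0.5$ (diminishing returns to adding more high-accuracy workers), whereas $\mathbb{E}\{R^{tot}_f\}$ scales essentially linearly in the number of workers who are actually rewarded in equilibrium; this mismatch is what allows the optimized net payoff to fall as $k^{\rm high}$ grows.

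The main obstacle is clearly the second half: because $\boldsymbol{\epsilon}$ and $R$ are coupled through the non-convex condition (\ref{condition_psne}) and the threshold structure of (\ref{Optimal_R_case1}), one cannot rule out that the platform reoptimizes $\boldsymbol{\epsilon}$ as $k^{\rm high}$ changes in a way that masks the diminishing-returns effect. I would address this by (a) using Lemma~\ref{monotone} to show that for the comparison instance, the optimized $\epsilon^{h*}$ and $\epsilon^{l*}$ vary only within a bounded range where the SNE type selected by the platform (i.e.\ $f$-SNE vs.\ $p$-SNE) is stable, and (b) verifying, for each of the two values of $k^{\rm high}$, that the same branch of Theorem~\ref{PlatformDecision2} is active, so that the comparison really reduces to comparing $\beta[P_f(k^{\rm high})-P_f(k^{\rm high}{-}\Delta)]$ against $\Delta\cdot(\text{per-worker reward})$, which the chosen parameters make negative. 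The remaining pieces — routine monotonicity arguments for $P_z$ in $p_h$ and the Bayes-rule derivations feeding Corollary~\ref{Reward_epsilonh} — are calculation and do not require new ideas.
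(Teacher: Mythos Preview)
The paper does not prove Observation~1 analytically at all: it is stated as an empirical finding extracted directly from Fig.~\ref{payoffvsph}. Both halves are supported by plotting the optimized platform payoff for the parameter set $N=100$, $p_l=0.6$, $\mu^{\rm prior}_{\rm high}=0.7$, $k^{\rm low}=20$, $c=1$, $\beta=1000$, $k^{\rm high}\in\{50,70\}$, $p_h\in[0.7,0.8]$, and then reading off the monotonicity in $p_h$ and the crossing in $k^{\rm high}$ (at $p_h=0.76$) from the curves, together with a one-sentence intuitive explanation for each. Your plan for the second half---construct the same numerical instance, compute everything, and compare $k^{\rm high}=50$ versus $k^{\rm high}=70$---is exactly what the paper does; the additional stability checks you propose in (a)--(b) go beyond what the paper verifies.

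Your plan for the first half is genuinely different: you attempt an analytical monotonicity proof where the paper offers only a numerical observation. The argument, however, has a gap. You establish that $P_z(k)$ increases in $p_h$ and that the per-worker threshold $R_f^{\rm str}$ (resp.\ $R_{pl}^{\rm str}$) is non-increasing in $p_h$, and then conclude that each branch of the Stage~II payoff $\beta P_z(k)-\mathbb{E}\{R^{tot}_z\}$ is non-decreasing in $p_h$. But $\mathbb{E}\{R^{tot}_z\}$ is not the threshold $R$ itself; it is $R$ times the expected number of workers whose reports match the majority, and that expected count \emph{increases} in $p_h$ (more workers are correct, hence more match the majority). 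So the cost term is a product of a decreasing and an increasing factor, and your two facts alone do not pin down its sign. To make the analytical route go through you would need an extra step---e.g., bounding the growth of the expected-matcher count and showing it is dominated by the decay of the threshold---which is not in your outline. Alternatively, you can simply align with the paper and treat the $p_h$-monotonicity as a numerical observation as well.
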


    In Fig. \ref{payoffvsph}, we also observe that given $p_h$ and $k^{\rm high}$ (e.g., $p_h=0.76$, $k^{\rm high}=50$), the platform payoff in the naive worker case is always larger than that in the strategic worker case. The platform can better manipulate the naive workers' belief, which helps achieve a higher platform payoff. We summarize the observation as follows:

	\begin{Obs}
		The platform benefits from workers' naiveness. 
	\end{Obs}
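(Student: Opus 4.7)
The claim is that $U_p^{\rm opt, nai} \geq U_p^{\rm opt, str}$, i.e., the platform's optimal payoff is weakly larger under naive workers than under strategic workers. The plan is to combine a decoupling argument on the naive side with a per-state dominance argument on the strategic side, falling back on numerical sweeps for the step where equilibrium non-monotonicity defeats a clean closed-form bound.

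First, I would exploit a decoupling property on the naive side. Because naive workers always adopt the degenerate posterior $\delta_{k_p^{\rm anu}}$ regardless of $\boldsymbol{\epsilon}$, the conditional payoff in state $(k, k_p^{\rm anu})$, call it $V^{\rm nai}(k, k_p^{\rm anu}, R)$, depends on $\boldsymbol{\epsilon}$ only through the realized announcement. This yields
\begin{equation*}
U_p^{\rm opt, nai} = \mathbb{E}_k\Bigl[\max_{k_p^{\rm anu} \in \{k^{\rm high}, k^{\rm low}\}} \max_{R \ge 0} V^{\rm nai}(k, k_p^{\rm anu}, R)\Bigr],
\end{equation*}
so the platform independently picks the best announcement for each realized $k$. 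On the strategic side, this per-$k$ flexibility is unavailable: $\boldsymbol{\epsilon}$ must be committed before $k$ is observed, and the induced posteriors $\mu^{\rm post, str}|_{k_p^{\rm anu}}(\boldsymbol{\epsilon})$ must obey the Bayes identities of Proposition~\ref{stg_post}, coupling the two announcement states.

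Second, I would pin down a truthful-revelation floor. Setting $\boldsymbol{\epsilon} = (0,0)$ makes $\mu^{\rm post, str}_{\rm high}|_{k^{\rm high}} = 1$ and $\mu^{\rm post, str}_{\rm low}|_{k^{\rm low}} = 1$, so strategic workers behave identically to naive workers receiving the truthful announcement. The truthful-strategic payoff therefore equals $\mathbb{E}_k V^{\rm nai}(k, k, R^*(k,k))$, which is automatically bounded above by $U_p^{\rm opt, nai}$ through the outer maximum over $k_p^{\rm anu}$ in the preceding display. The remaining task is to rule out that some non-truthful $\boldsymbol{\epsilon}$ strictly improves the strategic payoff beyond $U_p^{\rm opt, nai}$.

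The main obstacle is the per-state dominance inequality $\max_R V^{\rm str}(k, k_p^{\rm anu}, R, \mu_H) \leq \max_{b \in \{0,1\}} \max_R V^{\rm str}(k, k_p^{\rm anu}, R, b)$ required for the previous step. This is not a standard convexity fact: by Corollary~\ref{Reward_epsilonh} and Theorem~\ref{PlatformDecision2}, a larger $\mu_H$ raises both aggregated accuracy (beneficial) and required reward (costly), so $V^{\rm str}$ is not monotone in $\mu_H$ and may admit an interior maximum. I would attempt to resolve it by partitioning $\mu_H \in [0,1]$ according to the active equilibrium regime from Theorem~\ref{Equilibrium_SNE} ($n$-SNE, $p$-SNE, or $f$-SNE), invoking the reward characterization of Theorem~\ref{PlatformDecision2} within each regime to show that the optimal reward and resulting payoff are bracketed by the endpoint beliefs $\mu_H \in \{0,1\}$. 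If this step resists a fully analytical treatment, I would fall back on a comprehensive numerical sweep over $p_h$, $k^{\rm high}$, $\boldsymbol{\mu}^{\rm prior}$, and $\beta$, as illustrated in Fig.~\ref{payoffvsph}, to validate the observation empirically across the relevant parameter regime.
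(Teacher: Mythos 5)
The paper does not prove this statement analytically: it is an Observation supported solely by the numerical experiments in Fig.~\ref{payoffvsph} (the naive-worker curve lies above the strategic-worker curve for each $(p_h, k^{\rm high})$ tested), together with the one-line intuition that the platform can better manipulate naive workers' beliefs. Your proposal is therefore a genuinely different and more ambitious route: it attempts to upgrade the observation to a theorem. Your naive-side decoupling is correct --- since naive posteriors are degenerate at the announcement and independent of $\boldsymbol{\epsilon}$, the expected payoff is affine in $(\epsilon^h,\epsilon^l)$, the commitment constraint is vacuous, and the platform effectively chooses the best announcement per realized $k$. You also correctly locate the crux in the per-state dominance inequality: what is needed is an \emph{upper} bound on the strategic optimum by the best degenerate-belief payoff. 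Note, however, that your ``truthful-revelation floor'' step bounds the strategic optimum from \emph{below}, which does no work toward the claimed direction of the inequality and can be dropped.

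Two caveats on the crux. First, there is a plausible path to it: for a fixed target equilibrium $z \in \{n,p,f\}$ and fixed true $k$, the aggregated accuracy $P_z(k)$ does not depend on the belief, while the minimal sustaining reward is $c$ divided by a perceived matching-probability gain that is affine and increasing in $\mu_H$, so each ``elicit $z$'' payoff is monotone in $\mu_H$ and the upper envelope over $z$ would be endpoint-maximized --- \emph{except} that the $p$-SNE existence region defined by (\ref{condition_psne}) is an interval in $\mu_H$ that need not contain the endpoint $\mu_H=1$; an interior posterior could unlock a $p$-SNE with better bang-per-buck than anything available at either degenerate belief, which is exactly where the dominance could fail. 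Second, your declared fallback --- a numerical sweep --- collapses to what the paper actually does, so if the dominance step is not closed, your argument and the paper's support for the observation are of the same (empirical) strength; the added value of your write-up is that it isolates precisely which analytical fact would need to be verified to turn the observation into a theorem.
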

	
	In Fig. \ref{workerpayoffvsph}, it is interesting to observe that the aggregate worker payoff may decrease in $p_h$ (e.g., $p_h\ge 0.74$ for strategic workers). This is because the platform can harvest larger benefits from more capable (accurate) workers, leading to smaller worker payoffs. This result also bears important strategic implications. Consider the scenario where a platform aims to estimate workers' accuracy by asking them to report this information \cite{chaoWiOpt20}. Workers may not reveal a high accuracy (even if it is the truth), as doing so may benefit the platform but hurt the workers. We summarize the observation as follows:
	\begin{Obs}
		More capable workers may suffer: a worker population with an overall higher accuracy may unexpectedly obtain a smaller aggregate payoff. 
	\end{Obs}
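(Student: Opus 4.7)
The plan is to establish Observation 3 as an existence statement: there exist parameter regimes in which increasing the high-accuracy workers' solution accuracy $p_h$ strictly decreases the aggregate worker payoff. Since Observation 3 is a numerical finding displayed in Fig. \ref{workerpayoffvsph}, the natural justification combines a concrete parameter instantiation with an analytical explanation of the underlying mechanism, so that the numerical curve is not merely illustrative but has an identifiable cause.

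First, I would fix the setting exactly as in Section \ref{numerical} ($N=100$, $p_l=0.6$, $\mu^{\rm prior}_{\rm high}=0.7$, $k^{\rm low}=20$, $c=1$, $\beta=1000$, $k^{\rm high}\in\{50,70\}$) and, for each $p_h$ on a grid in $[0.7,0.8]$, solve the three-stage game using the backward-induction pipeline already established in Theorems \ref{Equilibrium_SNE}, \ref{pareto}, \ref{PlatformDecision2}, and \ref{info_naive}: compute the strategic workers' posterior belief from (\ref{belief_h_h})--(\ref{belief_l_l}); evaluate the $P^{\rm majority}$ probabilities via the Z-transform method of \cite{fernandez2010closed}; obtain $R^*(\boldsymbol{\epsilon},k,k_p^{\rm anu})$ from (\ref{Optimal_R_case1})--(\ref{Optimal_R_case2}); and finally search numerically over $\boldsymbol{\epsilon}\in[0,1]^2$ for $\boldsymbol{\epsilon}^*$. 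The aggregate worker payoff is then $N$ times the per-worker expected payoff under the Pareto-dominant SNE, after taking expectation over $k$ and over $k_p^{\rm anu}\mid k$ using (\ref{epsilonh}).

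The key analytical step that explains \emph{why} the aggregate payoff may decrease in $p_h$ is to inspect what happens at the IC-binding reward level. When the platform elicits an $f$-SNE by setting $R^*=R_f^{\rm str}(\boldsymbol{\epsilon}^*,k_p^{\rm anu})$, worker indifference between $(1,1)$ and the deviation $(0,\mathrm{rd})$ pins down $R_f^{\rm str}=c/(G_i^f-G_i^{\rm rd})$, where $G_i^f$ is the probability that worker $i$ matches the majority under the $f$-SNE and $G_i^{\rm rd}$ is the same probability after worker $i$ unilaterally deviates to $(0,\mathrm{rd})$. Substituting into (\ref{worker_payoff}) yields a per-worker expected payoff of $c\,G_i^{\rm rd}/(G_i^f-G_i^{\rm rd})$. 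As $p_h$ grows, the advantage $G_i^f-G_i^{\rm rd}$ widens (more accurate peers make the cost of random reporting biting harder), while $G_i^{\rm rd}$ is only mildly affected because it is determined by the majority vote among the $N-1$ other workers and by a single random report; the denominator dominates, so the per-worker payoff shrinks and, summed across $N$ workers, the aggregate worker payoff decreases, with the surplus absorbed by the platform through a lower $R^*$, which is consistent with the platform-payoff increase in Fig. \ref{payoffvsph}.

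The main obstacle will be that the regime of the Pareto-dominant SNE (an $n$-SNE, $p$-SNE, or $f$-SNE according to Theorem \ref{PlatformDecision2}) may switch as $p_h$ varies, and the closed-form identity above only applies cleanly when the platform's optimal reward is the $f$-SNE threshold. I would therefore verify numerically that on the chosen grid the elicited SNE stays in one regime for both realizations $k=k^{\rm high}$ and $k=k^{\rm low}$, so the analytical explanation directly applies; a short remark would then indicate that the same monotonicity argument survives in the $p$-SNE regime with $G_i^f$ replaced by the analogous high-accuracy-type match probability and $R_{pl}^{\rm str}$ playing the role of $R_f^{\rm str}$, since Corollary \ref{Reward_epsilonh} already shows both thresholds share the same comparative-statics structure.
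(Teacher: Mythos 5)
Your approach matches the paper's: Observation~3 is established there purely as a numerical finding, obtained by running the backward-induction pipeline on the stated parameter grid and reading off Fig.~\ref{workerpayoffvsph}, accompanied only by the one-line intuition that the platform harvests the gains from more capable workers through smaller rewards. Your reproduction of that experiment is exactly what the paper does, and your analytical supplement names the same mechanism the paper gestures at, so there is no missing idea. One imprecision in the supplement is worth fixing, however. The identity $R_f^{\rm str}=c/(G_i^f-G_i^{\rm rd})$ holds only for the worker type whose incentive constraint binds. Note that $G_i^{\rm rd}$ equals exactly $1/2$ (a random report matches whatever the others' majority is with probability one half, so it is not merely ``mildly affected'' by $p_h$), and $G_i^f-1/2=(2p_i-1)(P-1/2)$ with $P$ the believed probability that the other workers' majority is correct; hence the binding type is the low-accuracy one. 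The low-accuracy workers' payoff at the common reward is then $\frac{c/2}{(2p_l-1)(P-1/2)}$, which indeed decreases as $p_h$ raises $P$; but the high-accuracy workers' payoff at that same reward contains the term $\bigl(\frac{2p_h-1}{2p_l-1}-1\bigr)c$, which \emph{increases} in $p_h$. The aggregate over $k^{\rm high}$ high-accuracy and $N-k^{\rm high}$ low-accuracy workers therefore has competing terms, and its sign cannot be settled by your denominator-dominates argument alone; the numerical evaluation you already plan remains essential, which is consistent with the paper stating the result only as a ``may decrease'' observation rather than a theorem.
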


In Fig. \ref{Socialwelfarevsph}, we observe that the social welfare increases in $p_h$. Note that the reward terms are canceled in the social welfare, and a larger $p_h$ enables the platform to generate an aggregated solution with higher accuracy. As a result, the social welfare improves. We summarize the observation as follows:
\begin{Obs}
	The social welfare increases in the high-accuracy workers' solution accuracy $p_h$.
\end{Obs}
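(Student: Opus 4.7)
The plan is to exploit a cancellation of reward transfers. Summing the platform payoff $U_p = \beta P_a - \mathbb{E}\{R^{tot}\}$ with the aggregate worker payoff $\sum_i u_i = \mathbb{E}\{R^{tot}\} - c \sum_i e_i$, the reward terms cancel and I obtain the identity
\begin{equation*}
\mathrm{SW} = \beta \, P_a(k, p_h) - c \sum_{i} e_i.
\end{equation*}
This structural identity is the backbone of the argument and is already the intuition the paper invokes when stating Observation~4.

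Given this identity, I would first prove monotonicity inside each fixed Pareto-dominant SNE type. In an $n$-SNE, $\sum_i e_i = 0$ and $P_a = 1/2$, so $\mathrm{SW} = \beta/2$ is constant in $p_h$. In an $f$-SNE (resp.\ $p$-SNE) the total effort count equals $N$ (resp.\ $k$) and is therefore independent of $p_h$, while $P_a$ is a majority-vote probability over independent Bernoulli trials whose success parameters include $p_h$ at the high-accuracy workers. A direct coupling, replacing one high-accuracy outcome drawn with parameter $p_h$ by one drawn with parameter $p_h + \delta$, can only create more correct votes, so $P_a$ is strictly increasing in $p_h$ in both regimes. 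Hence within every fixed SNE type, $\mathrm{SW}$ is non-decreasing in $p_h$.

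The harder step is that the Pareto-dominant SNE itself moves with $p_h$ because the platform chooses $(\boldsymbol{\epsilon}^*, R^*)$ to maximize $U_p$ rather than $\mathrm{SW}$. Using Theorem~\ref{PlatformDecision2} I would partition the parameter space into the finitely many reward regimes $R^* \in \{0, R_{pl}^{\rm str}, R_f^{\rm str}\}$. As $p_h$ grows, both bang-per-bucks $B_f$ and $B_p$ rise, the regime thresholds in $\beta$ shift to the left, and the induced SNE can only evolve along the chains $n \to p \to f$ or $n \to f$. At each boundary the platform is by construction indifferent between its two candidate rewards, i.e.\ the two values of $U_p$ coincide, so the jump in $\mathrm{SW}$ across the boundary equals exactly the aggregate worker payoff under the new equilibrium.

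The main obstacle is showing this aggregate worker payoff is non-negative at every equilibrium switch, since Observation~3 already warns that worker payoffs can behave non-monotonically. At $R = R_{pl}^{\rm str}$ the high-accuracy workers' individual rationality binds with $u_i = 0$, while the low-accuracy workers enjoy a strictly positive payoff because they collect the expected consistency reward at zero effort cost; an analogous bookkeeping handles the switch at $R_f^{\rm str}$, where the marginal workers whose IR binds have $u_i = 0$ and the remaining workers retain a non-negative surplus. Consequently $\mathrm{SW}$ weakly increases at every regime boundary, and chaining this with the within-regime monotonicity of the previous step yields Observation~4.
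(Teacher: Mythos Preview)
The paper does not actually prove this statement analytically. Observation~4 is exactly that---a numerical observation drawn from Fig.~\ref{Socialwelfarevsph}, accompanied only by the one-line intuition that ``the reward terms are canceled in the social welfare, and a larger $p_h$ enables the platform to generate an aggregated solution with higher accuracy.'' Your first step, the identity $\mathrm{SW}=\beta P_a - c\sum_i e_i$, is precisely this reward-cancellation intuition, so at that level your proposal and the paper coincide. Everything beyond that point is your own attempt to make the observation rigorous, which the paper does not do.

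Your within-SNE argument is fine: fixing the effort pattern, $P_a$ is monotone in $p_h$ by the obvious coupling, and the effort cost is flat. The gap is in the transition analysis. First, the sentence ``the jump in $\mathrm{SW}$ across the boundary equals exactly the aggregate worker payoff under the new equilibrium'' is only correct when the old equilibrium is the $n$-SNE with $R^*=0$ (so the old worker surplus vanishes). At a $p\to f$ switch the jump in $\mathrm{SW}$ is the \emph{difference} $W_f-W_p$ of aggregate worker payoffs, and your bookkeeping that ``the marginal workers whose IR binds have $u_i=0$ and the remaining workers retain a non-negative surplus'' shows $W_f\ge 0$, not $W_f\ge W_p$. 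In the $p$-SNE at $R=R_{pl}^{\rm str}$ the low-accuracy workers collect expected reward at zero effort cost, so $W_p>0$ in general, and you have not ruled out $W_f<W_p$.

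Second, the claim that as $p_h$ grows the induced SNE moves monotonically along $n\to p\to f$ is asserted but not established. You appeal to bang-per-buck monotonicity, but the thresholds in Theorem~\ref{PlatformDecision2} also involve $\tilde{\beta}$ and condition~(\ref{condition_psne}), and all of them depend on $\boldsymbol{\epsilon}$, which is itself re-optimized in Stage~I as $p_h$ varies through a non-convex program (the paper stresses this). Without controlling how $\boldsymbol{\epsilon}^*$ moves with $p_h$, you cannot conclude that the regime boundaries in $\beta$ shift monotonically, nor that the equilibrium type never reverts. Since the paper itself stops at the numerical observation, these gaps may simply reflect that a fully rigorous statement is harder than the intuitive picture suggests.
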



    \subsection{Impact of Worker Prior Belief}
    In this subsection, we study the impact of the workers' prior belief on the optimal platform payoff, the aggregate worker payoff, and the corresponding information revelation strategies.\footnote{We apply the exhaustive search algorithm to find the optimal information revelation strategies. This is reasonable as they are long term strategies and hence are fixed within a long period.} In the experiments, we set  $N=100$, $p_l=0.6$, $p_h=0.75$, $k^{\rm low}=20$,  $c=1$, $\beta=1000$, and consider $\mu^{\rm prior}_{\rm high} \in \left\{0.01, 0.2, 0.4, 0.6, 0.8, 0.99\right\}$.\footnote{We do not assign $0$ or $1$ to  $\mu^{\rm prior}_{\rm high}$ because under either assignment, the problem degenerates to the case where the platform's information revelation strategy imposes no effect on the strategic workers' posterior belief. For example, if $\mu^{\rm prior}_{\rm high}=0$, any $\epsilon^h$ and $\epsilon^l$ will result in the same posterior belief $\mu^{\rm post, str}_{\rm high}|_{k^{\rm high}}=0$ (see (\ref{belief_h_h})).}
    
    \subsubsection{Impact of Worker Prior Belief on Platform/Worker Payoff}
    We first study how the optimal platform payoff and the aggregate worker payoff are affected by the workers' prior belief, as shown in Fig. \ref{payoffvsmu0} and Fig. \ref{workerpayoffvsmu0}, respectively. 
    
    In Fig. \ref{payoffvsmu0}, we observe that given $k^{\rm high}$, the platform's optimal payoff increases in $\mu^{\rm prior}_{\rm high}$. The reasons are two-fold. First, the real value of $k$ is drawn according to  $\boldsymbol{\mu}^{\rm prior}=\left(\mu^{\rm prior}_{\rm high}, \mu^{\rm prior}_{\rm low}\right)$. A larger $\mu^{\rm prior}_{\rm high}$ implies that the number of high-accuracy workers is more likely to be $k^{\rm high}$ than $k^{\rm low}$. Hence, the platform is more likely to generate an aggregated solution with a higher accuracy. This holds for both the strategic and naive workers. Second, (i) for strategic workers, the more they are inclined to believe  $k=k^{\rm high}$ a priori, the more they also believe  $k=k^{\rm high}$  following the information announcement. Hence, the platform can use smaller rewards to incentivize them, and hence achieves a higher payoff. (ii) For naive workers, they will trust whatever information is announced and discard the prior. Hence, the rewards needed do not depend on $\mu^{\rm prior}_{\rm high}$. The platform's optimal payoff also increases due to a more accurate aggregated solution.
    
    In Fig. \ref{workerpayoffvsmu0}, we observe that the strategic workers' aggregate payoff decreases in $\mu_{\rm high}^{\rm prior}$. This is because the platform will use smaller rewards to incentivize them, which leads to a smaller aggregate worker payoff. Interestingly, the naive workers' aggregate payoff does not change in $\mu_{\rm high}^{\rm prior}$, since the rewards  do not depend on $\mu_{\rm high}^{\rm prior}$, as discussed above. Based on Fig. \ref{payoffvsmu0} and Fig. \ref{workerpayoffvsmu0}, we summarize the observations as follows:
     \begin{Obs}
     	(i) The platform's optimal payoff increases in the workers' prior belief $\mu^{\rm prior}_{\rm high}$. \\
     	(ii) The strategic workers' aggregate payoff decreases in $\mu^{\rm prior}_{\rm high}$, while that of naive workers is independent of $\mu^{\rm prior}_{\rm high}$.
     	\end{Obs}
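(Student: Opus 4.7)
The plan is to establish both parts of the observation by decomposing the expected payoffs via the law of total probability over the realizations $(k, k_p^{\rm anu})$, and then handling the naive and strategic worker cases separately. Throughout, I would rely on the machinery already developed in Proposition~1, Corollary~1, and Theorem~3, and close the argument with an envelope-type step to pass from fixed $\boldsymbol{\epsilon}$ to the optimum $\boldsymbol{\epsilon}^*$.

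For part (i), I first fix any information revelation strategy $\boldsymbol{\epsilon}$ and write
\[
\mathbb{E}[U_p(\boldsymbol{\epsilon})] \;=\; \sum_{k\in\{k^{\rm high},k^{\rm low}\}}\sum_{k_p^{\rm anu}\in\{k^{\rm high},k^{\rm low}\}} \mu^{\rm prior}_k \cdot \Pr(k_p^{\rm anu}\mid k;\boldsymbol{\epsilon}) \cdot U_p^*(\boldsymbol{\epsilon},k,k_p^{\rm anu}),
\]
where $U_p^*$ is the maximized payoff from Theorem~3. For \emph{naive} workers, the posterior belief is degenerate and independent of $\mu^{\rm prior}_{\rm high}$, so each $U_p^*(\boldsymbol{\epsilon},k,k_p^{\rm anu})$ does not depend on the prior; a direct comparison using the majority rule shows $U_p^*$ is larger when the realized $k$ is $k^{\rm high}$ than when it is $k^{\rm low}$ (because more capable workers yield a higher $P_a$ without increasing the required reward thresholds). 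Shifting probability mass from $k^{\rm low}$ to $k^{\rm high}$ therefore raises the expectation linearly. For \emph{strategic} workers, I would additionally use Proposition~1 to show that $\mu^{\rm post,str}_{\rm high}|_{k^{\rm high}}$ is non-decreasing in $\mu^{\rm prior}_{\rm high}$, then invoke the mechanism behind Corollary~1 to conclude that the thresholds $R_f^{\rm str}$ and $R_{pl}^{\rm str}$ are non-increasing in $\mu^{\rm prior}_{\rm high}$, which strictly decreases the total reward payout under each $(k,k_p^{\rm anu})$ outcome while preserving the same Pareto-dominant SNE. Both effects push $U_p^*$ upward.

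To pass from the fixed-$\boldsymbol{\epsilon}$ statement to the optimum, I would invoke the envelope-type fact that if $\mathbb{E}[U_p(\boldsymbol{\epsilon};\mu^{\rm prior}_{\rm high})]$ is non-decreasing in $\mu^{\rm prior}_{\rm high}$ for every feasible $\boldsymbol{\epsilon}$, then $\max_{\boldsymbol{\epsilon}}\mathbb{E}[U_p(\boldsymbol{\epsilon};\mu^{\rm prior}_{\rm high})]$ is also non-decreasing. For part (ii), I would carry out an analogous decomposition of the aggregate worker payoff. For strategic workers, the argument above already shows the platform pays smaller consistency rewards as $\mu^{\rm prior}_{\rm high}$ rises while still eliciting the same SNE, so the aggregate reward received (net of the unchanged effort cost) shrinks. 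For naive workers, I would show that the reward thresholds depend only on $k_p^{\rm anu}$ (and not on the prior), and that the platform's optimal naive-case strategy announces $k^{\rm high}$ with probability one regardless of the prior, so the aggregate worker payoff under each realization and hence the expectation remains invariant in $\mu^{\rm prior}_{\rm high}$.

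The main obstacle I anticipate is managing the coupling between $\boldsymbol{\epsilon}^*$ and $R^*$: both can jump discontinuously across the piecewise regimes described in Theorem~3 as $\mu^{\rm prior}_{\rm high}$ varies, and the Pareto-dominant SNE in Stage~III can switch between $f$-SNE and $p$-SNE. The envelope step circumvents this by establishing pointwise monotonicity in $\mu^{\rm prior}_{\rm high}$ for each fixed $\boldsymbol{\epsilon}$ before optimizing. A secondary subtlety is the naive-case worker-payoff claim, which needs a supplemental observation that the platform's optimal naive-case announcement is invariant in $\mu^{\rm prior}_{\rm high}$; this can be argued by noting that the naive workers' posterior and therefore the per-realization platform payoff both ignore the prior, so the optimal $\boldsymbol{\epsilon}^*$ is determined entirely by the comparison between $U_p^*(\cdot,k,k^{\rm high})$ and $U_p^*(\cdot,k,k^{\rm low})$ terms that themselves do not depend on $\mu^{\rm prior}_{\rm high}$.
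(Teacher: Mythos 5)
This statement is presented in the paper as an \emph{Observation} drawn from the numerical experiments in Fig.~\ref{payoffvsmu0} and Fig.~\ref{workerpayoffvsmu0}; the paper offers only the two-fold intuition (more probability mass on $k^{\rm high}$ yields higher accuracy; strategic workers' posteriors shift upward so smaller rewards suffice) and does not attempt a formal proof. Your proposal is therefore more ambitious than what the paper does, and the parts of it that track the paper's intuition are fine: the monotonicity of $\mu^{\rm post,str}_{\rm high}|_{k_p^{\rm anu}}$ in $\mu^{\rm prior}_{\rm high}$ for fixed $\boldsymbol{\epsilon}$ (from Proposition~\ref{stg_post}), the resulting decrease of $R_f^{\rm str}$ and $R_{pl}^{\rm str}$ (the mechanism behind Corollary~\ref{Reward_epsilonh}), and the envelope step that passes pointwise monotonicity through the maximizations over $R$ and $\boldsymbol{\epsilon}$ are all sound.

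The genuine gap is in your first effect for part (i): you assert that ``a direct comparison using the majority rule shows $U_p^*$ is larger when the realized $k$ is $k^{\rm high}$ than when it is $k^{\rm low}$ (because more capable workers yield a higher $P_a$ without increasing the required reward thresholds).'' This is exactly the step the paper's own Observation~1 warns against: the per-worker threshold indeed depends only on the announcement and the posterior, but the \emph{total} expected reward $\mathbb{E}\{R^{tot}_z\}$ grows with the realized $k$ (more effort-exerting, higher-accuracy workers match the majority more often and hence collect $R$ more often), while the accuracy gain $P_z(k^{\rm high})-P_z(k^{\rm low})$ is marginally decreasing; the paper reports that the optimal platform payoff can \emph{decrease} in $k^{\rm high}$ for precisely this reason. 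So the sign of $U_p^*(\boldsymbol{\epsilon},k^{\rm high},k_p^{\rm anu})-U_p^*(\boldsymbol{\epsilon},k^{\rm low},k_p^{\rm anu})$ is parameter-dependent, and the ``shift probability mass toward the better realization'' argument does not close part (i) without additional conditions on $\beta$, $N$, and the accuracy parameters. A secondary weakness is the naive half of part (ii): even though the reward level is prior-independent, the realized aggregate worker payoff in an $f$-SNE depends on the true $k$ through $G_i$, so the mixture over realizations generally does depend on $\mu^{\rm prior}_{\rm high}$ unless payoffs are evaluated under the workers' degenerate posterior (or the per-realization aggregate payoffs happen to coincide); you would need to pin down which convention the claim uses. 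As it stands, the observation should be treated as an empirical finding with supporting intuition, which is how the paper treats it, rather than as a provable monotonicity result via your decomposition.
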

	
	\subsubsection{Impact of Worker Prior Belief on Platform Information Revelation}
	Fig. \ref{epsilonhl} illustrates how the platform's information strategies change with $\mu_{\rm high}^{\rm prior}$ (consider $k^{\rm high} =70$). We observe that $\epsilon^{h*}=1$ and $\epsilon^{l*}=0$ for naive workers. The platform should always announce a high average worker accuracy to naive workers, as it will require the minimum rewards to incentivize them.\footnote{The formal proof of this result is ready, yet we omit it in this paper due to space limitations.} However, this may not be true for strategic workers where the platform chooses $\epsilon^{h*}$  smaller than $1$ (e.g., $\mu^{\rm prior}_{\rm high}=0.4$ for the red curve). Interestingly, the platform even has incentives to announce a lower average solution accuracy than its actual value by choosing a positive $\epsilon^{l*}$ (e.g., $\mu^{\rm prior}_{\rm high}=0.6$ for the blue curve). These observations validate Theorem \ref{info_naive}, and are summarized as follows:
	 
	\begin{Obs}
		The platform always finds it optimal to announce $k^{\rm high}$ to naive workers, but not to strategic workers. It may even announce $k^{\rm low}$ to strategic workers when $k=k^{\rm high}$.
		\end{Obs}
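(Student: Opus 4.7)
The plan is to establish Theorem \ref{info_naive} by proving a negative existential statement: it suffices to exhibit parameter values under which the strategy $(\epsilon^h,\epsilon^l)=(1,0)$ is strictly dominated by some other revelation strategy. My approach is to first explain structurally why $(1,0)$ need not be optimal by using the decomposition (\ref{payoff_stage3}) together with Lemma \ref{monotone}, and then to back up the structural reasoning with an explicit numerical instance, in the spirit of Fig.~\ref{epsilonhl}.

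First, I would fix $\epsilon^l$ and view $\mathbb{E}\{U_p(\epsilon^h)\}$ as the weighted sum of the four case-payoffs in (\ref{payoff_stage3}). By Lemma \ref{monotone}, two of the constituent functions (namely $U_{h,h}(\epsilon^h)$ and $Q_{l,l}(\epsilon^h)U_{l,l}(\epsilon^h)$) decrease in $\epsilon^h$, while the remaining two (namely $Q_{l,h}(\epsilon^h)U_{l,h}(\epsilon^h)$ and $U_{h,l}(\epsilon^h)$ weighted by $Q_{h,l}$) increase in $\epsilon^h$. Because both families of terms enter with strictly positive weights whenever $\mu^{\rm prior}_{\rm high},\mu^{\rm prior}_{\rm low}\in(0,1)$, the derivative of $\mathbb{E}\{U_p(\epsilon^h)\}$ is the difference of two positive magnitudes and therefore can flip sign depending on $(p_h,p_l,k^{\rm high},k^{\rm low},\beta,\boldsymbol{\mu}^{\rm prior})$. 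Consequently the maximizer need not lie at $\epsilon^h=1$. A symmetric reasoning applies in the $\epsilon^l$ direction: Corollary \ref{Reward_epsilonh}(ii) says that when $k_p^{\rm anu}=k^{\rm high}$ the thresholds $R_f^{\rm str}$ and $R_{pl}^{\rm str}$ increase in $\epsilon^l$, and when $k_p^{\rm anu}=k^{\rm low}$ they decrease, so the per-case payoffs again have opposing monotonicities, leaving open the possibility that $\epsilon^{l*}>0$ dominates $\epsilon^l=0$.

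The main obstacle is that a fully closed-form proof is blocked by the layered coupling among (\ref{condition_psne}), (\ref{Optimal_R_case1}), and (\ref{Optimal_R_case2}): the reward thresholds $R_f^{\rm str}$ and $R_{pl}^{\rm str}$ involve complicated combinatorial probabilities of the form $P^{\rm majority}_{k}$ that must be evaluated via the Z-transform algorithm of \cite{fernandez2010closed}, so even the sign of $\partial\mathbb{E}\{U_p\}/\partial\epsilon^h$ at $\epsilon^h=1$ resists an analytic closed form. To finish the proof cleanly I would therefore fix a concrete parameter tuple (e.g., the one used for Fig.~\ref{epsilonhl}: $N=100$, $p_l=0.6$, $p_h=0.75$, $k^{\rm low}=20$, $k^{\rm high}=70$, $c=1$, $\beta=1000$, and a value of $\mu^{\rm prior}_{\rm high}$ such as $0.4$ or $0.6$) and numerically verify that either $\epsilon^{h*}<1$ or $\epsilon^{l*}>0$ strictly increases $\mathbb{E}\{U_p\}$ above its value at $(1,0)$. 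Exhibiting one such strict improvement, together with the structural monotonicity argument above, is sufficient to establish the ``not always optimal'' claim of Theorem~\ref{info_naive}.
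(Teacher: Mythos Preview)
Your proposal targets Theorem~\ref{info_naive} rather than the Observation actually in question. Observation~6 has \emph{two} halves: (a) for naive workers the platform \emph{always} finds $(\epsilon^{h*},\epsilon^{l*})=(1,0)$ optimal, and (b) for strategic workers it does not, and may even choose $\epsilon^{l*}>0$. Your entire write-up addresses only (b); you never touch the naive-worker claim. That claim requires a separate argument: since a naive worker's posterior is simply $\mu^{\rm post,nai}_{\rm high}|_{k^{\rm high}}=1$ regardless of $\boldsymbol{\epsilon}$, the reward thresholds under $k_p^{\rm anu}=k^{\rm high}$ are fixed and minimal, so announcing $k^{\rm high}$ with probability one minimizes incentive cost in every realization of $k$. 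The paper alludes to this in the text preceding the Observation and notes in a footnote that the formal proof is omitted; but you need at least to state and sketch it to cover the statement as written.

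For part (b), your structural-plus-numerical plan is essentially what the paper does: the paper justifies Theorem~\ref{info_naive} via Lemma~\ref{monotone} and the decomposition (\ref{payoff_stage3}), and then \emph{validates} Observation~6 purely by pointing to Fig.~\ref{epsilonhl} at $\mu^{\rm prior}_{\rm high}=0.4$ (for $\epsilon^{h*}<1$) and $\mu^{\rm prior}_{\rm high}=0.6$ (for $\epsilon^{l*}>0$). So on this half your approach is aligned. One small slip: you claim that $Q_{l,l}(\epsilon^h)U_{l,l}(\epsilon^h)$ is decreasing in $\epsilon^h$, but Lemma~\ref{monotone} says $Q_{l,l}$ decreases while $U_{l,l}$ \emph{increases}, so the product's monotonicity is not immediate from the lemma alone; the paper's argument keeps the $Q$ and $U$ factors separate and only concludes non-monotonicity of the sum, which is all that is needed.
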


	In Fig. \ref{epsilonhl}, when the strategic workers are more confident in $k=k^{\rm high}$ a priori (i.e., a larger $\mu_{\rm high}^{\rm prior}$), the platform should  announce $k_p^{\rm anu}=k^{\rm high}$ less frequently (i.e., a smaller $\epsilon^{h*}$ and a larger $\epsilon^{l*}$).  As a result, the strategic workers will be more inclined to believe $k=k^{\rm high}$. This benefits the platform since the required rewards to incentivize the workers can be reduced. We summarize the observations as follows:
	 
	 \begin{Obs}
	 	For strategic workers, the platform's optimal information revelation strategy $\epsilon^{h*}$ decreases in the workers' prior belief $\mu^{\rm prior}_{\rm high}$, while $\epsilon^{l*}$ increases in $\mu^{\rm prior}_{\rm high}$.
	 \end{Obs}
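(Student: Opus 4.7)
The plan is to establish these two comparative statics as consequences of monotone comparative statics (Topkis--Milgrom--Shannon) applied to the expected platform payoff $\mathbb{E}\{U_p(\epsilon^h,\epsilon^l;\mu^{\rm prior}_{\rm high})\}$ in (\ref{payoff_stage3}). The first step is to substitute the case-optimal reward $R^*(\boldsymbol{\epsilon},k,k_p^{\rm anu})$ from Theorem \ref{PlatformDecision2} into the four case payoffs $U_{h,h}, U_{h,l}, U_{l,h}, U_{l,l}$, so that $\mathbb{E}\{U_p\}$ becomes a function of $(\epsilon^h,\epsilon^l)$ and the single parameter $\mu^{\rm prior}_{\rm high}$. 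Using Proposition \ref{stg_post}, all posteriors entering these case payoffs are explicit rational functions of the three arguments, and the case weights satisfy $Q_{h,h}+Q_{h,l}=\mu^{\rm prior}_{\rm high}$ and $Q_{l,h}+Q_{l,l}=1-\mu^{\rm prior}_{\rm high}$.

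The key step is to show that, within any fixed SNE regime, $\mathbb{E}\{U_p\}$ has \emph{decreasing differences} in $(\epsilon^h,\mu^{\rm prior}_{\rm high})$ and \emph{increasing differences} in $(\epsilon^l,\mu^{\rm prior}_{\rm high})$. Intuitively, a larger $\mu^{\rm prior}_{\rm high}$ shifts probability mass from the $(l,\cdot)$ cases to the $(h,\cdot)$ cases and simultaneously raises every posterior $\mu^{\rm post, str}_{\rm high}|_{k^{\rm high}}$ in (\ref{belief_h_h}) and lowers every $\mu^{\rm post, str}_{\rm low}|_{k^{\rm low}}$ in (\ref{belief_l_l}), so the marginal benefit from further exaggeration via $\epsilon^h$ shrinks while the marginal cost of downward distortion via $\epsilon^l$ shrinks as well. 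Combining this observation with the directional results of Corollary \ref{Reward_epsilonh} and Lemma \ref{monotone}, I would show that the mixed partial $\partial^2 \mathbb{E}\{U_p\}/(\partial \epsilon^h\,\partial \mu^{\rm prior}_{\rm high})$ is non-positive and $\partial^2 \mathbb{E}\{U_p\}/(\partial \epsilon^l\,\partial \mu^{\rm prior}_{\rm high})$ is non-negative. Topkis' theorem then yields that $\epsilon^{h*}$ is non-increasing and $\epsilon^{l*}$ is non-decreasing in $\mu^{\rm prior}_{\rm high}$ within each regime.

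The main obstacle is the piecewise structure of $R^*$ in (\ref{Optimal_R_case1})--(\ref{Optimal_R_case2}), together with the $p$-SNE feasibility condition (\ref{condition_psne}), both of which depend on $\boldsymbol{\epsilon}$ and can switch as $\mu^{\rm prior}_{\rm high}$ varies. To handle this, I would partition the $(\epsilon^h,\epsilon^l,\mu^{\rm prior}_{\rm high})$ space into finitely many regions in which the active SNE and the active reward branch are constant, prove the within-region single-crossing differences as above, and then glue across region boundaries using a Milgrom--Shannon single-crossing argument on the envelope of the value function, taking care that regime transitions (where either (\ref{condition_psne}) becomes active or $\beta$ crosses one of the thresholds $1/B_p$, $1/B_f$, $\tilde\beta$) preserve the order. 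Since the paper already notes that closed forms for $\boldsymbol{\epsilon}^*$ are intractable because of this coupling, I would complement the analytical gluing with the numerical evidence in Fig.~\ref{epsilonhl} to confirm that monotonicity is not destroyed at regime transitions.
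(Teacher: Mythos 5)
This statement is an \emph{Observation}, not a theorem: the paper supports it only by the numerical experiments reported in Fig.~\ref{epsilonhl} (together with the informal credibility argument that a smaller $\epsilon^{h}$ and larger $\epsilon^{l}$ make the announcement $k^{\rm high}$ more believable, reducing the rewards needed). You are therefore attempting something strictly stronger than what the paper establishes, which is laudable, but your proposal does not close the gap between the numerical claim and an analytical proof. The central step --- that $\mathbb{E}\{U_p\}$ has decreasing differences in $(\epsilon^h,\mu^{\rm prior}_{\rm high})$ and increasing differences in $(\epsilon^l,\mu^{\rm prior}_{\rm high})$ --- is asserted from intuition rather than derived. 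Lemma~\ref{monotone} and Corollary~\ref{Reward_epsilonh} give you first-order monotonicity of the case payoffs and thresholds in $\epsilon^h$ and $\epsilon^l$ only; they say nothing about how those derivatives vary with $\mu^{\rm prior}_{\rm high}$, which is what a cross-partial sign requires. The case payoffs depend on the posteriors of Proposition~\ref{stg_post} through the majority-voting probabilities in (\ref{condition_psne}) and through the piecewise reward thresholds $R_{pl}^{\rm str}$ and $R_f^{\rm str}$, and the paper explicitly stresses that $\mathbb{E}\{U_p(\epsilon^h)\}$ is a sum of terms with \emph{opposing} first-order behavior and is not even monotone in $\epsilon^h$. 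Signing a second-order mixed derivative of such a sum is exactly the hard part, and your proposal skips it.

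Two further structural problems remain even if the within-region differences could be signed. First, the gluing across regime boundaries is where Milgrom--Shannon arguments typically break: when $R^*$ in Theorem~\ref{PlatformDecision2} jumps among $0$, $R_{pl}^{\rm str}$, and $R_f^{\rm str}$, or when (\ref{condition_psne}) switches on or off as $\mu^{\rm prior}_{\rm high}$ varies, the envelope value function can lose single crossing; you acknowledge this and then fall back on ``numerical evidence in Fig.~\ref{epsilonhl}'' for precisely the step that needed proof, which reduces your argument to the paper's own. Second, the statement concerns the \emph{joint} optimizer $(\epsilon^{h*},\epsilon^{l*})$; coordinate-wise Topkis arguments (fix $\epsilon^l$, vary $\epsilon^h$, and vice versa) do not deliver monotonicity of the joint argmax unless you also establish supermodularity of the objective in $(-\epsilon^h,\epsilon^l)$, which you do not address. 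As written, the proposal is a plausible research plan, not a proof; an honest rendering of this result is what the paper gives --- a numerical observation --- and upgrading it would require carrying out the cross-partial computations explicitly, most likely under parameter restrictions, rather than appealing to the figures.
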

%

%
	%

	\section{Conclusion}\label{conclusion}
	In this paper, we study strategic information revelation in an IEWV problem.  The problem is a challenging non-convex program, yet we exploit its special structure to characterize the properties of the optimal solutions. We show that for naive workers, the platform should always announce  a high average worker accuracy. However, for strategic workers, it needs to tackle a tradeoff and may even have an incentive to announce an average accuracy lower than the actual value. Moreover, we show the surprising result that the platform payoff may decrease in the number of high-accuracy workers.
	
	For the future work, we plan to study the problem under multi-dimensional worker heterogeneity, where both the workers' costs and solution accuracy are heterogeneous. Moreover, it will be interesting to study costly information revelation (with cost incurred by information acquisition) in future work.

	\bibliographystyle{IEEEtran}
	\bibliography{ref}
	%
	%
	%
	%


\end{document}